\newcommand{\remove}[1]{}
\newtheorem{definition}{Definition}
\newtheorem{lemma}{Lemma}
\newcommand{\qedsymb}{\hfill{\rule{2mm}{2mm}}}
\newenvironment{proof}
{\begin{trivlist}
\item[\hspace{\labelsep}{\bf\noindent Proof: }]
}
{\qedsymb\end{trivlist}}
\newlength {\squarewidth}
\newcounter{linecounter}
\begin{document}

\author{Davide Canepa\footnote{LIP6, Univ. Pierre \& Marie Curie - Paris 6, LIP6-CNRS UMR 7606, France.} \hspace{1cm} Xavier Defago\footnote{JAIST, Japan Advanced Institute in Science and Telecomunication, Japan} \hspace{1cm} Taisuke Izumi\footnote{Graduate School of Engineering, Nagoya Institute of Technology, Japan} \hspace{1cm} Maria Potop-Butucaru$^*$}
 
\date{ }
\title{Emergent Velocity Agreement in Robot Networks}
\maketitle

\thispagestyle{empty}
\begin{abstract} 
In this paper we propose and prove correct 
a new self-stabilizing velocity agreement (flocking) algorithm for oblivious and 
asynchronous robot networks. 
Our algorithm allows a flock 
of  uniform robots to follow a 
flock head emergent during the computation whatever its direction in plane. 
Robots are asynchronous, oblivious and do not share a common coordinate system.
Our solution includes three modules architectured as follows: 
creation of a common coordinate system that also allows the emergence of a flock-head, 
setting up the flock pattern and moving the flock. 
The novelty of our approach steams in identifying the necessary 
conditions on the flock pattern placement and the velocity of the flock-head (rotation, translation or speed) that allow the flock 
to both follow the exact same head and to preserve the flock pattern. Additionally, our system is {\it self-healing} and {\it self-stabilizing}. In the event of the head leave (the leading robot disappears or is damaged and cannot be recognized by the other robots) the flock agrees on another head and 
follows the trajectory of the new head. Also, robots are {\it oblivious} (they do not recall the result of their previous computations) and we make no assumption on their initial position. The step complexity of our solution is $O(n)$.
\end{abstract}

\newpage
\setcounter{page}{1}
\section{Introduction}
Flocking gained recently increased attention in diverse areas such as biology, economy, language study or agent/sensor networks. In biology, flocking refers the coordinate behaviour of a group of birds or animals when they sense some imminent threat or lookup for food. In economy, the emergent behaviour that regulates the stock markets can be seen as a form of flocking. The emergency of a common language in primitive societies is also an instantiation of flocking. 

In the context of robot networks, flocking is the ability of a group of robots to coordinate and move in the plane or space. This coordinated motion has several civil and military applications ranging from zone exploration to space-crafts self-organization.

In distributed robot networks there are two types of agreement problems that have been studied so far: point or pattern agreement  and velocity agreement.  
{\it Point agreement} (gathering or convergence) aims at instructing robots to reach an exact or approximate common point not known {\it a priori}. The dual is the scattering problem where robots are instructed to reach different positions in the plane. Furthermore, 
pattern agreement deals with instructing robots to eventually arrange in a predefined shape (i.e. circular, rectangular etc). 

{\it Velocity agreement} or flocking refers the ability of robots to coordinate and move in 2D or 3D spaces without any external intervention. 
The literature agrees on two different strategies to implement flocking.
The first strategy is based on a predefined hierarchy. That is, there is an a priori leader clearly identified in the group that will lead the group and each group member will follow the leader trajectory. An alternative is to obtain an emergent coordination of the group without a predefined leader.  The difficulty of this approach comes from the permanent stress for connectivity maintenance. That is, if the flock splits then it may never converge to a single flock. 

In this paper we are interested in uniform flocking strategies. That is, the head of the flock emerges during the computation hence the solution is self-healing. Also, the flock does not know a priori the motion trajectory. That is, the head of the flock will lead the flock following its own trajectory that may be predefined or decided on the fly.  
Our work is developed in the asynchronous CORDA model $\cite{flocchini00distributed,prencipe01corda}$ 
one of the two theoretical models proposed so far for oblivious distributed robot networks.

The first distributed model for robot networks, SYm,  was introduced by Suzuky and Yamashita  
$\cite{suzukiYama94,suzuki96distributed,suzuki99distributed}$. In SYm model 
robots are oblivious and perform a cycle of elementary actions as follows : 
observation (the robot observes the environment), computation 
(the robot computes its next position based on the information collected in 
the observation phase) and 
motion (the robot changes its position to the newly computed position).  
In this model robots cannot be interrupted during the execution of a cycle. 
The CORDA model breaks the execution cycle in elementary actions. That is, a robot can be 
activated/turned off while it executes a cycle. Hence, robots are not any more synchronized.

The \textit{flocking problem} although 
largely discussed for real robots (\cite{8971839,FlckOstAvoid,8348518}) 
was studied from distributed theoretical point of view mainly by 
Prencipe $\cite{gervasi03coordination,gervasi01flocking}$. The authors propose 
non-uniform algorithms where robots play one of the following roles: leader 
or follower. The leader is unique and all the followers know the leader robot.
Obviously, when the leader crashes, disappears or duplicates 
the flock cannot finish its task. In \cite{CG07} we extend the results in \cite{prencipe00achievable} and propose a probabilistic flocking architecture. 
However, we make the assumption that the leader and consequently the flock do not change their direction and trajectory. Our current approach is different, the leader is not known a priori 
but it will emerge during the computation. 
When the current leader disappears or is damaged and not recognized as a correct robot, the 
other robots in the system agree on another leader and the flock can finish its task.
Furthermore, the flock can change both its direction and trajectory in order to agree with the emergent leader velocity.

Fault-tolerant (but not self-stabilizing) flocking has been addressed in \cite{SIW09,YSDT11}.
 In \cite{SIW09} the authors propose a fault tolerant flocking algorithm in the SYm model using a leader oracle and a failure detector. In our solution the leader or the head of the flock emerges during the computation. Also, our solution works in asynchronous settings and their decision is solely based on their current observation.
In \cite{YSDT11} the authors also propose a fault tolerant flocking.
It is assumed the SYm model (awaken robots execute their operations in synchronous steps) and the  k-bounded scheduler (in between two actions of a robot any other robot executes at most $k$ actions). Also the solution needs 
agreement on one axis, agreement on chirality and non-oblivious robots. Contrary to this approach, our solution does not need any a priori agreement on axis or chirality. Moreover, we assume oblivious robots.

Several works from robotics propose recently heuristics for flocking(e.g. \cite{LC08a,MSC10}).
In \cite{LC08a}, for example, the authors propose a solution for non-uniform flocking.  In their proposal  the leader has to execute a different strategy than the rest of the flock. Hence, the system is not uniform.


\paragraph{Our contribution.}
In this paper we propose and prove correct a new asynchronous flocking algorithm 
in systems with oblivious and uniform robots.  
Additionally, we identify the necessary conditions on the flock pattern placement and the flock head velocity (rotation, translation, speed) to allow the flock to maintain the same leader and the common coordinate system and also to preserve the motion pattern.
Our solution is composed of three asynchronous {\it self-stabilizing} and {\it self-healing} phases. First robots agree on 
a common coordinate system and a leader. Once this phase is finished 
robots form a flocking pattern and move preserving the 
same system of coordinates and the same leader. In the event of the head leave (the leading robot disappears or is damaged and cannot be recognized by the other robots) the flock agrees on another head and 
follows the trajectory of the new head. Also, robots are {\it oblivious} (they do not recall the result of their previous computations) and we make no assumption on their initial position. The complexity of our solution is $O(n)$.

\paragraph{Paper organization.}
The paper is organized as follows: Section \ref{sec:model} defines the model of the system, Section \ref{sec:flocking-problem} specifies the problem based on the flocking informal definitions in different areas ranging from biological systems to space navigation. In this section we also propose a brief description of our system architecture. Section 
\ref{sec:common-coordinates} sets up the common coordinate system, 
Section \ref{sec:flocking} details the formation of the flocking pattern and the necessary conditions on the flock placement, 
Section \ref{sec:motion} proposes the rules for moving the flock and identifies the necessary conditions on the flock head velocity.

\section{Model}
\label{sec:model}
Most of the notions presented in this section are borrowed 
from \cite{gervasi01flocking,prencipe01corda}. 
We consider a system of autonomous mobile robots that work in the CORDA model \cite{prencipe01corda}.
 
Each robot is capable of observing its surrounding, computing a destination
based on what it observed, and moving towards the computed destination:
hence it performs an (endless) cycle of observing, computing, and moving.
Each robot has its own local view of the world. 
This view includes a local Cartesian coordinate system having an origin, a unit of
length, and the directions of two coordinate axis (which we will refer to as
the x and y axis), together with their orientations, identified as the positive
and negative sides of the axis. 

The robots are model as units with computational capabilities, which are
able to freely move in the plane. They are equipped with sensors that let
each robot observe the positions of the others with respect to their local
coordinate system. Each robot is viewed as a point, and can see all the other
robots in the flock.

The robots act totally independent and asynchronously from each other,
and do not rely on any centralized directives, nor on any common notion of
time. Furthermore, they are oblivious, meaning that they do not remember
any previous observation nor computations performed in the previous steps.
Note that this feature combined with no assumptions on the initial position of the robots 
gives to the algorithms 
designed in this model the nice property of {\it self-stabilization} \cite{dol00}. That is, every decision taken by a robot does not depend on what happened previously
in the system and robots do not use potentially corrupted data stored
in their local memory. 

Robots in the flock are anonymous (i.e. they are a priori indistinguishable by their appearances and they do not have any kind of identifiers
that can be used during the computation).  Moreover, there are no explicit direct
means of communication; hence the only way they have to acquire information
from their fellows robots is by observing their positions.
They execute the same algorithm (the system is uniform), which takes as input the observed positions
of the fellow robots, and returns a destination point towards which they target their move.

Summarizing, each robot moves totally independent and asynchronously
from the others, not having any bound on the time it needs to perform a move,
hence a cycle; therefore, a robot can be seen while it moves. In addition, 
robots are oblivious, and anonymous.  We make no assumption on the initial position of robots or 
a common coordinate system.

\section{The flocking problem} 
\label{sec:flocking-problem}
Reynolds proposed in the mid of 80's  three rules that have to be respected by any algorithm that simulates a flock-like behaviour. He successfully applied these rules in designing several animations.  At that time the flock entities were 
called boids and the model was as follows: each boid has the ability to sense its local neighbours; each boid can sense the whole environment, all boids recalculate their current state simultaneously once each time unit during the simulation.

In this model, according to Reynolds, the flocking rules are as follows: 
  
\begin{itemize}
\item Separation: steer to avoid crowding local flock-mates.
\item Alignment: steer towards the average heading of local flock-mates.
\item Cohesion: steer to move toward the average position of local flock mates.
\end{itemize}

 Interestingly, the model proposed by Reynolds is similar to the previously described SYm model. Robots can sense the environment (the other robots in the system) and they periodically and simultaneously recalculate their state.  However, in CORDA model (the one used in the current work) the computation is asynchronous. 
Nevertheless, the main and important difference with respect to the Reynolds assumptions is the impossibility to use the  history of the computation in order to implement the flocking rules. Note that the second rule of Reynolds indirectly use this information. Therefore,  in the case of robot networks these rules should be adapted. 

In distributed robot networks acceptance, flocking allows a group or a formation of robots to change their position either by  following a pre-designated or an emergent leader. In this case the flocking is reffered as {\it uniform}. Intuitively, a flock is a group of robots that move in the plane in order to execute a task while maintaining a specific formation. This informal definition  implicitly assumes the existence of an 
unique leader of the flock that will lead the group during the task execution and the existence of a flocking pattern. Also it is assumed a virtual link between the head and the rest of the group. Therefore, three elements seem to be essential in the definition of the flocking : the head or the leader of the group, the pattern and the orientation of the pattern with respect to the leader. Based on these elements, flocking can be seen as the motion of the virtual rigid formed by the flock and its head following a trajectory predefined or defined on-the-fly. It follows that both the flock and its head periodically synchronize their velocity 
in order to maintain the flock.
In the following we specify the uniform flocking problem (i.e. the leader emerges during the computation). We first recall the definition of leader election and pattern formation. According to a recent study, \cite{DPV10,FPSW08}, pattern formation and leader election are related problems. Our specification naturally extends this observation to the flocking problem.

\begin{definition}[Leader Election]\cite{FPSW08}
Given the positions of $n$ robots in the plane, the $n$
robots are able to deterministically agree on the same robot called the leader. 
\end{definition}

\begin{definition}[Pattern Formation]\cite{FPSW08} 
The robots have in input the same pattern, called the
target pattern $\mathcal F$, described as a set of positions in the plane given in 
lexicographic order (each robot sees the same pattern according to the direction and orientation
of its local coordinate system). They are required to form the pattern: at the end 
of the computation, the positions of the robots coincide, in everybody's local view, 
with the positions of $\mathcal F$, where $\mathcal F$ may be translated, rotated, and scaled in each local coordinate system. 
\end{definition}

\begin{definition}[Uniform Flocking]
Let $r_1 \ldots r_n$ be a set of robots and let $\mathcal{F}$ be the flocking pattern. The set of robots satisfy the flocking specification if the following properties hold:
\begin{itemize}
\item {\bf head/leader emergence} eventually robots agree on an unique head (leader), $r_1$;
\item {\bf pattern emergence} eventually robots, $r_2, \ldots, r_n$, form the pattern $\mathcal{F}$;
\item {\bf velocity agreement} after any modification of the $r_0$ position, robots in the pattern rotate and translate $\mathcal{F}$ in order to converge to the same relative position and orientation between $r_0$  and $\mathcal{F}$ as it was before the modification.
\item {\bf no collision} any robot motion is collision free.
\end{itemize}
\end{definition}

Note the common flavour between the Reynolds rules and the above properties. {\it No collision} property corresponds to the separation rule. {\it Velocity agreement} corresponds to the alignment rule and finally {\it leader and pattern emergence} are similar to the cohesion property. 
  
 In the following we combine three different tasks to solve the uniform flocking in systems where robots are asynchronous, do not share the same coordinate systems, are oblivious and uniform.
First, we design a novel strategy for equipping a set of robots with a common coordinate systems.
To this end we propose a probabilistic strategy that creates two singularity points.
Then, we combine this module with existing probabilistic election strategies (\cite{CG07,DP07}) in order to create the third singularity point. The motion of this third point will eventually designate the head of the flock (robot $r_0$) and the orientation of the common coordinate system.
Then, the emergent common coordinate system is further used by all the robots but $r_0$ to arrange themselves in a flocking pattern, $\mathcal{F}$, that  will further follow the head $r_0$. During the pattern motion, both the head and the common coordinate system are preserved.

\section{Common Coordinate System and Flock Head Emergence}
\label{sec:common-coordinates}
The construction of a common coordinate system is as follows. First, robots 
agree on one axis, then they agree on the second axis,  orientation of axis and the head of the flock. Due to space limitation, the details  and the correctness proof of these algorithms are proposed in the Annexes (section \ref{sec:common-coordinates-details}).

\paragraph{Agreement on the first axis.} Note that
one axis is defined by two distinct points. 
The algorithm idea is very simple: robots 
compute the barycentre of their convex hull. The furtherest couples of robots 
with respect to the barycentre
(if their number is greater than one) probabilistically 
move further from the barycentre along the line defined by themselves 
and the barycentre. Two robots $r_i$ and $r_j$ belong to the set of the $Far\ Robots$ if dist(i, j) $\geq$ dist(w,k) $\forall$ w,k robots in the system. With high probability the above strategy converges to a configuration where the set $Far\ Robots$ contains an unique couple of robots.  
\paragraph{Agreement on the second axis.}
The construction of this second axis is conditioned by the existence of 
two unique nodes. We chose these two nodes as follows: one is the centre of smallest 
enclosing circle while the second one is given by any leader election algorithm.  
Several papers discuss the election of a leader (e.g. \cite{CG07,DP07}). 

\paragraph{Axis orientation and flock head emergence.}
In order to orient the axis we first align the two $Far\ Robots$, $R_A$ and $R_B$, and the leader (see Figure \ref{alignment}). Once the alignment is performed, the first is oriented instructing the leader to create a disimetry between the two points.

\begin{figure}
\begin{center}
\includegraphics[scale=0.5]{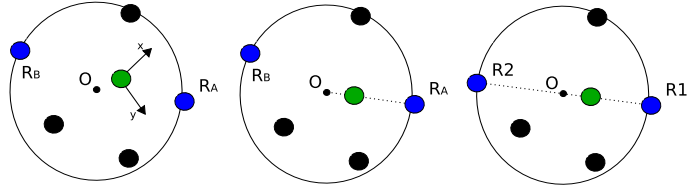}
\end{center}
\caption{Alignment of $Far \ Robots$ and Leader}\label{alignment}
\end{figure}

The alignment strategy is as follows.
If the leader is not aligned with the other two robots then
it will choose between the two robots belonging to the $Far \ Robots$ the one with a bigger value of $x$. In case of symmetry, a bigger value of $y$. Then, it will move toward the intersection of the radius of that robot and the circumference of the cercle with center in $O$ (the center of the $SEC$) and the radius equal to $dist(O, Leader)$.
Finally,  the robot not chosen by the Leader (referred as $R_B$)  will align with the other two robots. $R_B$ moves only when the Leader is aligned with  $R_A$. That is,  when one of the two Far robots, sees that the Leader is aligned with the other $Far \ Robot$ and $O$, then it moves following the $SEC$  until it forms a line with the Leader, the center of the SEC and the other robot in $Far \ Robots$. Note that the two $Far \ Robots$ are on the $SEC$.

For now on,  the $Far  \ Robots$ nearest to the Leader will be referred as $R1$  and the other one $R2$.  Robot $R1$ will play the {\bf flock head} role.

\section{Pattern Emergence}
\label{sec:flocking}
In this section we address the formation of the flocking pattern. Note that we work in a system where robots do not have a common coordinate system. The previous section propose strategies to uniquely identify 3 robots that altogether with the center of the SEC define a common coordinate system. In the following, we assume that over the initial set of $n$ robots 3 robots (referred in the previous section Leader, R1 and R2) are reserved for maintaining the coordinate system while the other $n-3$ can be placed in any shape that will be further reffered as the flocking pattern.
However, we impose a condition on the placement of the shape with respect to the position of the robots that define the references (Leader, R1 and R2) in order to preserve both the unicity of the references and the common coordinate system. 

\begin{lemma}
\label{lemma:1}
The area where the pattern is placed has to satisfy the following three conditions in order to preserve the common coordinate system defined by Leader, R1 and R2. 
\begin{enumerate}
\item All robots must be inside the circumference having $\overline{R1R2}$ as diameter.
\item All the robots must be in the side of the $SEC$ with $R2$ and $y$ negative. 
\item The circle with radius $dist(Leader,O)$ and center in $O$ must be empty.
\end{enumerate}
\end{lemma}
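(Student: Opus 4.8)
The plan is to first fix a concrete coordinate frame that records exactly what ``preserving the common coordinate system'' means, and then to check that each of the three conditions enforces one of the invariants on which the reconstruction of that frame depends. I would place the origin at $O$, the centre of the $SEC$, let $\rho$ denote its radius, and take the line through the two $Far\ Robots$ as the vertical axis. By the alignment procedure of Section~\ref{sec:common-coordinates}, $R_B=R2$ was moved along the $SEC$ until $Leader$, $O$, $R_A$, $R_B$ became collinear; hence $R1$ and $R2$ sit at antipodal points $(0,\rho)$ and $(0,-\rho)$, so the segment $\overline{R1R2}$ is a diameter of the $SEC$ and $|\overline{R1R2}|=2\rho$ is maximal. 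The $Leader$ lies on the ray $O\to R1$ at distance $d=dist(Leader,O)<\rho$, i.e. at $(0,d)$, which is what distinguishes $R1$ (the far robot on the $Leader$ side) from $R2$ and orients the frame. The common coordinate system is reconstructed from any configuration by recovering $O$ and $\rho$ from the $SEC$, the first axis from the farthest pair, $R1$ versus $R2$ from the $Leader$ side, and the $Leader$ itself from its distinguished position; the lemma is proved once I show that, under Conditions 1--3, each of these four data is returned unchanged.

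Next I would treat the conditions one at a time. For \textbf{Condition 1} the key observation is that the circle having $\overline{R1R2}$ as diameter coincides with the $SEC$, since $\overline{R1R2}$ is already a diameter of the $SEC$ and a circle is determined by one of its diameters. Thus requiring every robot to lie inside this circle is exactly requiring every robot to lie strictly inside the $SEC$, from which two things follow. First, the $SEC$ is unchanged: $R1,R2$ remain the only robots on the boundary and are antipodal, so the smallest enclosing circle of the whole configuration is still the circle of radius $\rho$ about $O$; hence $O$, $\rho$, $R1$ and $R2$ are all preserved. Second, by the elementary fact that two points lying in a disk of diameter $\delta$ are at distance at most $\delta$, with equality only for antipodal boundary points, no pair of pattern robots can reach distance $2\rho$; therefore $\{R1,R2\}$ remains the \emph{unique} farthest pair, so the $Far\ Robots$ and the first axis are preserved.

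For \textbf{Condition 3} I would argue that the $Leader$ is recovered as the robot closest to $O$: emptiness of the open disk of radius $d$ about $O$ forces every pattern robot to lie at distance at least $d$ from $O$, while $dist(Leader,O)=d$, so the $Leader$ is the unique robot at minimal distance from the centre and its identity (and the value $d$) is preserved. Finally, \textbf{Condition 2} preserves the orientation: confining all pattern robots to the half of the $SEC$ on $R2$'s side (the $y$-negative half-plane) keeps the $Leader$-side half, which contains $R1$ and the $Leader$, free of pattern robots, so the asymmetry that the $Leader$ creates between $R1$ and $R2$---and hence the chosen orientation of both axes---cannot be mimicked or reversed by a pattern robot. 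Combining the four preservation statements shows the whole frame is reconstructed identically, which is the claim.

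I expect the main obstacle to be the $SEC$-invariance step inside Condition 1: one must argue carefully that adding the $n-3$ interior points leaves the smallest enclosing circle literally equal to the circle about $O$ of radius $\rho$ (using that a $SEC$ is determined by the points lying on its boundary, and that an antipodal boundary pair already pins it down), and one must decide whether ``inside'' is meant strictly, since a pattern robot placed on the boundary diametrically opposite to another would tie the farthest-pair distance and break uniqueness. A secondary subtlety is to state precisely, in the reconstruction rule, how the $Leader$ is identified, so that Conditions 2 and 3 together genuinely single it out; the remaining computations are the routine distance estimates sketched above.
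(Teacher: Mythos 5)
Your proof is correct and rests on the same condition-by-condition decomposition as the paper's --- condition 1 protects the $SEC$ and the uniqueness of the farthest pair, condition 2 the $R1$/$R2$ distinction, condition 3 the $Leader$ --- but you prove the converse implication. The paper's proof is a necessity argument: for each condition it exhibits what breaks when that condition is violated (a robot outside the $SEC$ changes the $SEC$ and hence the references; robots in positions symmetric about the $x$ axis make $R1$ and $R2$ indistinguishable; a robot closer to $O$ than the $Leader$ usurps the $Leader$ role). You instead prove sufficiency: assuming the three conditions, the frame $(O,\rho,R1,R2,Leader)$ is reconstructed unchanged, using the genuinely additional observations that the circle on $\overline{R1R2}$ \emph{is} the $SEC$, that an enclosing disk with an antipodal boundary pair is automatically the smallest enclosing disk, and that interior points cannot tie the farthest-pair distance $2\rho$. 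Sufficiency is what the algorithm actually needs downstream, so your version is arguably the more useful one, and it is considerably more rigorous than the paper's three-sentence sketch. Two small caveats. First, your justification of condition 2 (keeping $R1$'s half empty so that the asymmetry created by the $Leader$ is undisturbed) differs from the paper's, which forbids pairs symmetric about the $x$ axis; the paper's concern is sharper, because in the $Flocking\ Formation$ the $Leader$ sits at $O$ and is equidistant from $R1$ and $R2$, so only the pattern robots' positions can break that tie, and a reflection-symmetric placement would genuinely destroy the distinction. Second, in condition 3 a pattern robot at distance exactly $dist(Leader,O)$ from $O$ would still tie the $Leader$ for closeness to the centre, so the forbidden disk must be taken closed --- the same strict-versus-nonstrict boundary issue you already flag for condition 1 and should resolve the same way here.
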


\begin{proof} 
If a robot moves outside the $SEC$, at the next round\footnote{A round is a fragment of execution where each robot in the system executes its actions} the 
$SEC$ will change and consequently the references.  It follows the necessity of condition one. If there exist robots in symmetrical positions with
respect to the $x$ axis, then the system loses the capability to distinguish $R1$ and $R2$. 
This proves the necessity of point two. Point three is motivated by the need of an unique leader. If a robot
goes closer to the center of the $SEC$ than the leader then it becomes $Leader$ and so the 
references will change.
\end{proof}

In order to realize the flocking additional constraints on the shape of the area where the pattern is deployed are needed. These conditions will be discussed in the next section.

The flocking pattern  is obtained in two steps. A first step called bootstraping and a second step that is basically a colission free strategy to move to the pattern positions.
In between these two phases the Leader will move perpendiculary to the segment $R1R2$ in order to orient the second axis. This orientation will be used by the robots in defining a total order among them.

\paragraph{Pattern Bootstraping.}
The bootstraping process takes two phases. In the first phase, all robots but the leader are placed 
on the smallest enclosing circle(SEC). First, the robots closest to the boundaries of the SEC are placed, then recursivelly the other robots.  The algorithm avoids collisions and ensures that robots preserve the referenced (e.g. Leader, R1 and R2) computed in the previous section. In the second phase, the robots on the SEC but R1 will be placed on 
the semi-circle not occupied by the R1 as follows. $R1$ is 
in the position $SEC \cap [O,Leader)$ and $R2$ is on the opposite side of the $SEC$. The other robots are disposed on the quarter of circle around $R2$.
 During this process the references are used  to help the deployement of the others robots. Also, the movement of robots is done such that the semantic of the references is preserved. The code of the algorithms and their analysis are proposed in Annexes, Section \ref{sec:bootstrapping}.

\paragraph{Flocking Pattern Formation.}
 In the following we defined the flocking pattern robots can form in order to maintain the common coordinate system and the references defined in the previous section. 

The flocking pattern $\mathcal{F}$ = $\{ p_1, p_2, \cdots,p_{n-3}, p_{o}, p_{R2} \}$ is the set of points given in input to the robots. It has two distinguished points $p_{o}$ and $p_{R2}$. We call the two distinguished points the $Anchor \ Bolts$ of the pattern, that will correspond to the position of robots $R2$ and to the point ($O$, $dist(Leader,O)$) of the common coordinate system. 
Note that robots start this phase in the following configuration: The segment $OLeader$ is perpendicular to the segment $R1R2$ (where $O$ is the center of the SEC) and all the other robots are disposed on the quarter of circle around $R2$.
In order to form the flocking pattern in a colision free manner robots adopt the following strategy.
First, all robots but the references hook the pattern,  eventually scale  and rotate it   to $R2$ and to the point ($O$, $dist(Leader,O)$). Then they totally order the set of robots based on their coordinates in the common coordinate system and associate to each robot a position in the pattern. Since the order is based on the common coordinate system, all robots will define the exact same order. Then robots move following their order hence collisions are avoided.
In Figure \ref{fig:1} robot $r_2$ has a trajectory that intersects the trajectory of both robots $r_1$ and $r_3$.
However collisions are avoided since following the total order $r_1$ moves first, then  $r_2$ and finally $r_3$.
The detailed code of the algorithm and its analysis are proposed in Annexes, Section \ref{sec:pattern formation}. 
 
\begin{figure}
\begin{center}

\includegraphics[scale=1]{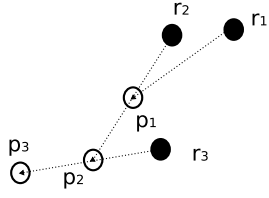}
\end{center}
\caption{Pattern formation strategy}\label{fig:1}
\end{figure}

Once the flocking pattern is formed the Leader moves  to the center of the $SEC$.
This last movement brings the robots in what will be called latter  {\bf Flocking  Formation}.  The motion of this formation will be studied in the next section. 

\section{Velocity Agreement}
\label{sec:motion}
In this section we propose a strategy to synchronize the $Flocking\ Formation$ and the head of the flock. 
The idea is to use $R1$ and $R2$ as two ends of a virtual spring. The other robots will be "pulled" by $R_1$ and "pushed" by $R_2$. Any time $R_1$ or $R_2$ moves, the center of the $SEC$ changes. It follows that  the $Flocking\ Formation$  is not valid since the Leader position in not anymore on the center of the $SEC$. Then, the motion of $R_1$ and $R_2$ is blocked until the flocking pattern is reformed.  The Leader moves 
back to the center of the SEC which makes the $Focking\ Formation$ valid and deblocks 
the motion of robots $R_1$ and $R_2$. 

In the following we precisely characterize two safe regions $\mathcal{M}$ and $\mathcal{K}$.  
$\mathcal{M}$ is the zone where $R1$ is allowed to move and
$\mathcal{K}$ is the area where the pattern can be disposed. In the following we propose 
a general definition of $\mathcal{M}$ and $\mathcal{K}$. 



\begin{definition}
\label{AreaM}
\label{AreaK}
Let $\mathcal{M}$ be the area with $y \geq \pm(kx+R1)$. 
Let 
$\mathcal{K}$ be the area   
between the axis $y=\pm h^{\prime}x$ for $y<0$ and $y=\pm hx + R2$ for $y<0$.
\end{definition}

We additionally define two particular angles, $\alpha$ and  $\beta$ between the axis that border  $\mathcal{M}$ and $\mathcal{K}$. Latter, we prove that $\alpha$ and $\beta$  should 
be greater than $90^0$ in order to verify the conditions stated in Lemma \ref{lemma:1}.

\begin{definition} Let $\alpha$ be the angle between $y= \pm hx + R2$ and  $y =\mp kx+R1$.
Let denote by $A$ the intersection of 
these two lines.  Let $\beta$ be the angle between $y=\pm h^{\prime}x$ and $y= \mp kx$ (see Figure \ref{fig:safe-area}).
\end{definition}

\begin{figure}
\begin{center}
\includegraphics[scale=.4]{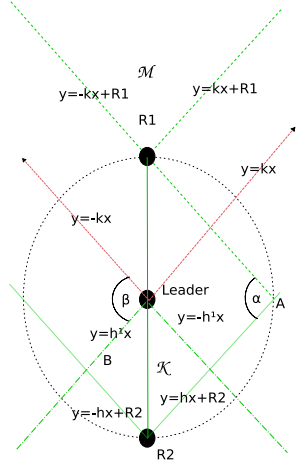}
\end{center}
\caption{Angles $\alpha$ and $\beta$ and the safe areas $\mathcal M$ and $\mathcal K$} 
\label{fig:safe-area}
\end{figure}

The flocking algorithm, Algorithm \ref{alg:PatMot} is executed only when the {\it Flocking Formation} is reached (see Section \ref{sec:flocking}). The algorithm foresee the movement of robots $R1$, $R2$ and the $Leader$.
$R1$ is the robot that imposes the direction of the movement so it can move to any point in the $\mathcal{M}$ area. When it moves, thanks to the constraints we have imposed, the references hold steady and so, at the next observation, all the robots can recognize the references: Leader, R1 and R2. Additionally, all the other robots will be inside the $SEC$ (on the $R2$ side). Then,  all the other robots but $R2$  execute the flocking pattern formation algorithm (presented in Section \ref{sec:flocking}) to align the pattern to the new direction (defined by the axis $R2R1$). Once the $Flocking \ Formation$ is recreated the robots $R1$ or $R2$ can move again. 

When the distance between $R1$ and $R2$  is greater than some parameter $d_{Rmax}$, then $R2$ moves inside the $\mathcal K$ area (along the $R2R1$ segment) within distance $d$. Following Lemma \ref{NoOutSEC} below this distance should be less or equal than $(\frac{dist(R1R2)}{2} - \frac{dist(R1B)}{2cos\delta})$  where $B$ is the closest robot to $R2$ and $\delta$ is the angle  $\angle R2R1B$.

\begin{algorithm} 
\begin{quote}
\begin{tabbing}

1) \textbf{if} \= (Robots form the {\bf Flocking Formation})\\
\> \textbf{If} \= $(dist(R1,R2)<d_{Rmax}$\\
\> \> \textbf{then} R1 moves to a point $\in \ \mathcal{M}$ ;\\
2) \> \textbf{else if} $(dist(R1,R2) \geq d_{Rmax}$\\
\> \> \textbf{then} R2 moves within distance $d$ along the segment $R2R1$;\\
3) \textbf{else} Leader moves perpendicular to $R1R2$ at the center of the SEC;\\
                             then robots execute Flocking Formation algorithm (Section \ref{sec:flocking}).
\end{tabbing}
\end{quote}
\caption{The motion of the Flocking Formation with parameters $d$ and $d_{Rmax}$.}
\label{alg:PatMot}
\end{algorithm}

In the sequel we  determine the relation  between the axis that define the areas $\mathcal{M}$ and $\mathcal{K}$ (i.e. angles $\alpha$ and $\beta$) so that after each movement of $R1$ or $R2$ all the references are preserved.
Firstly, we must guarantee that the $SEC$ will change coherently with the movement of $R1$ and $R2$. At each step the $SEC$ corresponds to the circumference having $R1$ and $R2$ as diameter. 

\begin{lemma}
\label{NoOutSEC}
Let $R1^{\prime}$ be the point where robot $R1$ moves ( inside the $\mathcal{M}$ area). The circle having as diameter $R1^{\prime}R2$ contains all the robots if the angle $\alpha$ is at least $90^0$.
\end{lemma}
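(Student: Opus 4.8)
The plan is to convert the containment claim into an angular condition through the inscribed-angle (Thales) theorem, and then to reduce the verification to a single extremal configuration that is governed by the angle $\alpha$. First I would recall the converse of Thales' theorem: a point $P$ lies inside or on the circle having $R1^{\prime}R2$ as a diameter if and only if the segment $R1^{\prime}R2$ is seen from $P$ under an angle of at least $90^0$, i.e. $\angle R1^{\prime}PR2 \geq 90^0$, which is equivalent to the sign condition $(R1^{\prime}-P)\cdot(R2-P)\leq 0$. Hence proving that the new circle contains every robot amounts to establishing $\angle R1^{\prime}PR2 \geq 90^0$ for the position $P$ of each robot. Since $R1^{\prime}$ is an arbitrary point of $\mathcal{M}$ and, by the three conditions of Lemma \ref{lemma:1}, every robot is confined to the area $\mathcal{K}$ (inside the old $SEC$, on the $R2$ side, with $y<0$), it suffices to bound the angle uniformly, i.e. to show $\angle R1^{\prime}PR2 \geq 90^0$ for all $R1^{\prime}\in\mathcal{M}$ and all $P\in\mathcal{K}$.

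Next I would fix the common coordinate system with origin $O$ and $R1R2$ along the $y$-axis, and argue that the infimum of $\angle R1^{\prime}PR2$ over the two regions is attained on their facing boundaries. Keeping $P$ fixed, the angle at $P$ shrinks as the direction from $P$ to $R1^{\prime}$ approaches the direction from $P$ to $R2$, so the minimizing $R1^{\prime}$ must lie on the boundary line $y=\mp kx+R1$ of $\mathcal{M}$ that faces the flock; symmetrically, keeping $R1^{\prime}$ fixed, the worst robot position lies on the boundary line $y=\pm hx+R2$ of $\mathcal{K}$ that faces the head and passes through $R2$. By the convexity of both wedges, and by the mirror symmetry across the $y$-axis (so that it is enough to treat one side of the $\pm$), this boundary-to-boundary configuration is the genuine worst case, and the problem reduces to head and robot positions lying on the two lines whose intersection is the point $A$ and whose opening is exactly the angle $\alpha$.

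In this extremal configuration I would then exploit the definition of $\alpha$ directly. Because the worst robot $P$ and the point $R2$ both lie on the line $y=\pm hx+R2$, the ray from $P$ to $R2$ runs along that boundary line, while the ray from $P$ to $R1^{\prime}$ points across the wedge of opening $\alpha$ into $\mathcal{M}$. A short computation of $(R1^{\prime}-P)\cdot(R2-P)$ in these coordinates shows that the sign of this dot product is controlled by $\cos\alpha$: it stays nonpositive exactly when $\cos\alpha\leq 0$, that is, when $\alpha\geq 90^0$. Combining this with the Thales reformulation of the first step yields $\angle R1^{\prime}PR2\geq 90^0$ for every robot, hence every robot remains inside the circle of diameter $R1^{\prime}R2$, which is the assertion of the lemma.

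I expect the main obstacle to be the reduction step of the second paragraph: rigorously proving that the minimum of $\angle R1^{\prime}PR2$ over $\mathcal{M}\times\mathcal{K}$ is attained on the pair of facing boundary lines and is measured by $\alpha$, rather than being approached in some oblique limit as $R1^{\prime}$ or $P$ recede to infinity along their wedges. The Thales reformulation and the final verification that $\cos\alpha\leq 0$ forces the correct sign are routine; the real work is the monotonicity and extremality argument that isolates $\alpha$ as the single quantity one must constrain.
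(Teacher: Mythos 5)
Your proposal is correct, and it isolates the same extremal configuration as the paper's proof ($R1^{\prime}$ on the boundary $y=\mp kx+R1$ of $\mathcal{M}$, the critical robot $B$ on the crossed boundary $y=\pm hx+R2$ of $\mathcal{K}$ through $R2$), but it verifies that configuration by a genuinely different mechanism. The paper argues synthetically: it tracks the intersection point $T$ of the new circle of diameter $R1^{\prime}R2$ with the line $R1B$ and asserts that $T$ slides toward $R1$ (expelling $B$) when $\alpha<90^0$ and recedes from $B$ when $\alpha\geq 90^0$; the monotone behaviour of $T$ is claimed rather than proved. Your route through the converse of Thales' theorem and the sign of $(R1^{\prime}-P)\cdot(R2-P)$ is more checkable, and it in fact dissolves the obstacle you flag at the end: writing
\[
(R1^{\prime}-P)\cdot(R2-P)=(R1-P)\cdot(R2-P)+(R1^{\prime}-R1)\cdot(R2-P),
\]
the first term is $\leq 0$ by Thales applied to the \emph{old} $SEC$ (condition~1 of Lemma~\ref{lemma:1}), while $R1^{\prime}-R1$ is a nonnegative combination of the two boundary directions of the wedge $\mathcal{M}$, and each such direction dotted with $R2-P$ reduces, up to positive factors, to $1-hk$, which has the sign of $\cos\alpha$. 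Hence for $\alpha\geq 90^0$ both terms are nonpositive for \emph{every} finite $R1^{\prime}\in\mathcal{M}$ and every $P\in\mathcal{K}$ inside the old $SEC$, and no extremality or limit-at-infinity argument over $\mathcal{M}\times\mathcal{K}$ is needed for the sufficiency the lemma actually claims (the infimum of the angle is only approached as $R1^{\prime}$ recedes along the wedge, but every finite position already satisfies the inequality term by term). The one imprecision in your write-up is the claim that the sign of the whole dot product ``is controlled by $\cos\alpha$'': only the increment term is tied to $\cos\alpha$, and the slack in the first term vanishes only when $P$ lies on the old $SEC$, so the true worst case is the point where the $\mathcal{K}$-boundary meets the old $SEC$, not an arbitrary point of the boundary line. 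What your approach buys is a rigorous, coordinate-free reduction to a sign condition; what the paper's buys is a picture of \emph{how} containment fails (which robot exits, and through which arc), at the cost of an unproved monotonicity step.
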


\begin{proof}
Consider the worst case: $R1^{\prime}$ belongs to $y= \pm kx +R1$ and there exists a 
robot $B$ $\neq$ $R2$ on the border of the $\mathcal M$ areas: $y=\pm hx+R2$.
Without restraining the generality consider the case where $R1^{\prime}$ moves on the segment 
$y= -kx +R1$ and $B$ is on the line $y=hx+R2$.
When $R1^{\prime}$ diverges from $R1$ the circumference of the new SEC defined by the point $R1^{\prime}$ and $R2$ intersects the line $R1B$ in a point $T$.  
When $\alpha < 90^0$ and $R1^{\prime}$ diverges from $R1$, $T$ moves inside the segment $R1B$ towards $R1$. Hence, at least one robot in the formation (robot $B$) will be outside the new position of the SEC.
When $\alpha \geq 90^0$ and $R1^{\prime}$ diverges from $R1$ the $T$ diverges from $B$ hence every robot in the formation will be inside the new SEC.
\remove{
It's trivial to show that the circumference diverge monotonically from the chord passing for $R2$ and $B$ until the middle point of the chord.
Now, since $\overline{BR2}$ is exactly the half part of that chord having $R1$ as starting position, and when it moves, the chord will tend, we can say that the circumference will ever diverge from $\overline{BR2}$. So, since the circumference and the segment have more than one common point ($R2$), the tangent to the circumference in $R2$ and the axis on $\overline{BR2}$, must be the same. But this is true only if the radius (and then the diameter) passing for $R2$ is perpendicular to $\overline{BR2}$. But if $R1^{\prime}$ go to infinity, this condition will be 
true only if $\alpha = 90^o$.  Now, as we know that $R1$ can make only finite movements even if also extremely big,  the diameter will never be perpendicular to $\overline{BR2}$ and so $\overline{BR2}$ and the circumference will never have other common point then $R2$.}
\end{proof}

\begin{lemma}
\label{NoOutSECR2}
After the movement of  $R2$, the circle having as diameter $R1$$R2^{\prime}$, contains all the robots if $d \leq (\frac{dist(R1R2)}{2} - \frac{dist(R1B)}{2cos\delta})$ where $B$ is the closest robot to $R2$ and $\delta$ is the angle  $\angle R2R1B$. 
\end{lemma}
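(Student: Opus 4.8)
The plan is to reduce the statement ``all robots stay inside the new circle'' to a single angular test applied to one worst-case robot. The tool is Thales' theorem: a point $P$ lies inside (resp. on, outside) the circle of diameter $R1R2'$ precisely when the inscribed angle $\angle R1 P R2'$ is greater than (resp. equal to, less than) $90^0$. So it suffices to show that after $R2$ slides to $R2'$ along $R2R1$, every robot $P$ of the formation still satisfies $\angle R1 P R2' \geq 90^0$. Since the $R1$ endpoint is fixed and only the $R2$-side of the disk contracts toward $R1$, the robots at risk are those near $R2$; I would first argue that the robot closest to $R2$, namely $B$, is the worst case, so that once $B$ is shown to remain inside, all robots do.

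To make this quantitative I would place $R1$ at the origin and $R2$ on the positive $x$-axis, so that $R2=(L,0)$ with $L=dist(R1,R2)$ and $R2'=(L-d,0)$. Writing $b=dist(R1,B)$ and $\delta=\angle R2 R1 B$, the robot $B$ sits at $(b\cos\delta, b\sin\delta)$. A direct evaluation of $\vec{BR1}\cdot\vec{BR2'}$ gives $b\,[\,b-(L-d)\cos\delta\,]$, so the inscribed angle at $B$ is at least $90^0$ exactly when $(L-d)\cos\delta \geq b$, i.e. when $d \leq L-\frac{b}{\cos\delta}$. Because $B$ already lay inside the old $SEC$ of diameter $R1R2$ we have (setting $d=0$ in the same computation) $b \leq L\cos\delta$, hence the threshold $L-\frac{b}{\cos\delta}$ is nonnegative. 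Consequently the hypothesis $d \leq \frac{L}{2}-\frac{b}{2\cos\delta}=\tfrac12\bigl(L-\tfrac{b}{\cos\delta}\bigr)$ of the lemma implies $d \leq L-\frac{b}{\cos\delta}$, which is exactly the sufficient condition above, so $B$ stays inside the new circle with room to spare.

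The step I expect to be the real obstacle is the worst-case claim: that keeping $B$ inside keeps every robot inside. In the coordinate model, a robot $P_i$ at distance $b_i$ and angle $\delta_i$ exits the shrinking circle first if its threshold abscissa $b_i/\cos\delta_i$ is largest, so I must show that $B$ (closest to $R2$) maximizes $b_i/\cos\delta_i$ over the formation. This is where the geometry imposed by Lemma \ref{lemma:1} and the bootstrapping enters: all robots lie inside the $SEC$ and on the $R2$ side with $y<0$, arranged on a quarter of circle around $R2$, so the robot nearest $R2$ is also the one whose exit threshold is extremal. I would formalize this by comparing, for any other robot $P_i$, the quantity $b_i/\cos\delta_i$ against $b/\cos\delta$ using the constraints of Lemma \ref{lemma:1}, and then conclude that the single inequality for $B$ controls the whole formation, which completes the argument.
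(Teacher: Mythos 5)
Your Thales/dot-product computation is correct and in fact supplies the ``simple geometrical constructions'' that the paper's proof leaves implicit. With $R1$ at the origin and $R2=(L,0)$, the exact condition for a robot at $(b\cos\delta,\,b\sin\delta)$ to remain in the circle of diameter $R1R2'$ is $d\leq L-\frac{b}{\cos\delta}$, and the bound in the statement is exactly half of this, hence sufficient; the paper presents its formula as ``the maximal distance $R2$ can move,'' which your computation shows is off from the tight threshold by a factor of two, though this does not affect the ``if'' direction of the lemma. On the single-robot part you have therefore done strictly more than the paper, whose proof consists of choosing a worst-case robot and asserting the formula.

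The genuine gap is the reduction to the worst-case robot, which you correctly identify as the crux but then only announce (``I would formalize this by comparing\dots''). Worse, the claim you would need --- that the robot closest to $R2$ maximizes the exit threshold $t_i=\mathrm{dist}(R1,B_i)/\cos\delta_i=(x_i^2+y_i^2)/x_i$ --- is false in general. With $L=10$, the robot $A=(8,-0.1)$ is at distance about $2.00$ from $R2$ and has $t_A\approx 8.001$, while $C=(9,-2.12)$ is farther from $R2$ (distance about $2.35$) yet has $t_C=9.5$; both lie strictly inside the old circle of diameter $R1R2$ and in the $R2$-half with $y<0$ required by Lemma \ref{lemma:1}. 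Taking $d$ equal to the lemma's bound computed from $A$ (about $0.999$) puts $R2'$ at $(9.001,0)$ and ejects $C$, since $t_C=9.5>9.001$. The paper quietly sidesteps this by redefining $B$ inside its proof as ``the robot such that after $R2$ moves, $B$ is on the border of the new SEC and no other robot is outside'' --- i.e., the maximizer of $t_i$ --- which makes the reduction a tautology but contradicts the statement's ``closest robot to $R2$.'' To close your argument you must either adopt that extremal definition of $B$ (and flag the discrepancy with the statement), or prove from the specific geometry of the pattern inside $\mathcal K$ that closeness to $R2$ and maximality of $t_i$ coincide; the constraints of Lemma \ref{lemma:1} alone do not give you this.
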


\begin{proof}
According to Algorithm $\ref{alg:PatMot}$, $R2$ will move on the $y$ axis within distance $d$ from its current position. Let this position be $R2^{\prime}$. Now we will find a value of $d$ such that any robot inside the $\mathcal{K}$ zone, is always inner (or at least on) the circumference having $R1 R2^{\prime}$ as diameter. 
Consider the robot $B$ such that after $R2$ moves,  $B$ is on the border of the new SEC and no other robot is outside the new SEC. Let $\delta$ be the angle between $R2, R1$ and $B$. Using simple geometrical constructions it follows that $d$, the maximal distance $R2$ can move, should less or equal than $(\frac{dist(R1R2)}{2} - \frac{dist(R1B)}{2cos\delta})$.

\end{proof}

Note that after the movement of $R1$ or $R2$  they are still in the set $Far \ Robots$. 
 

\begin{figure}
\begin{center}
\includegraphics[scale=.4]{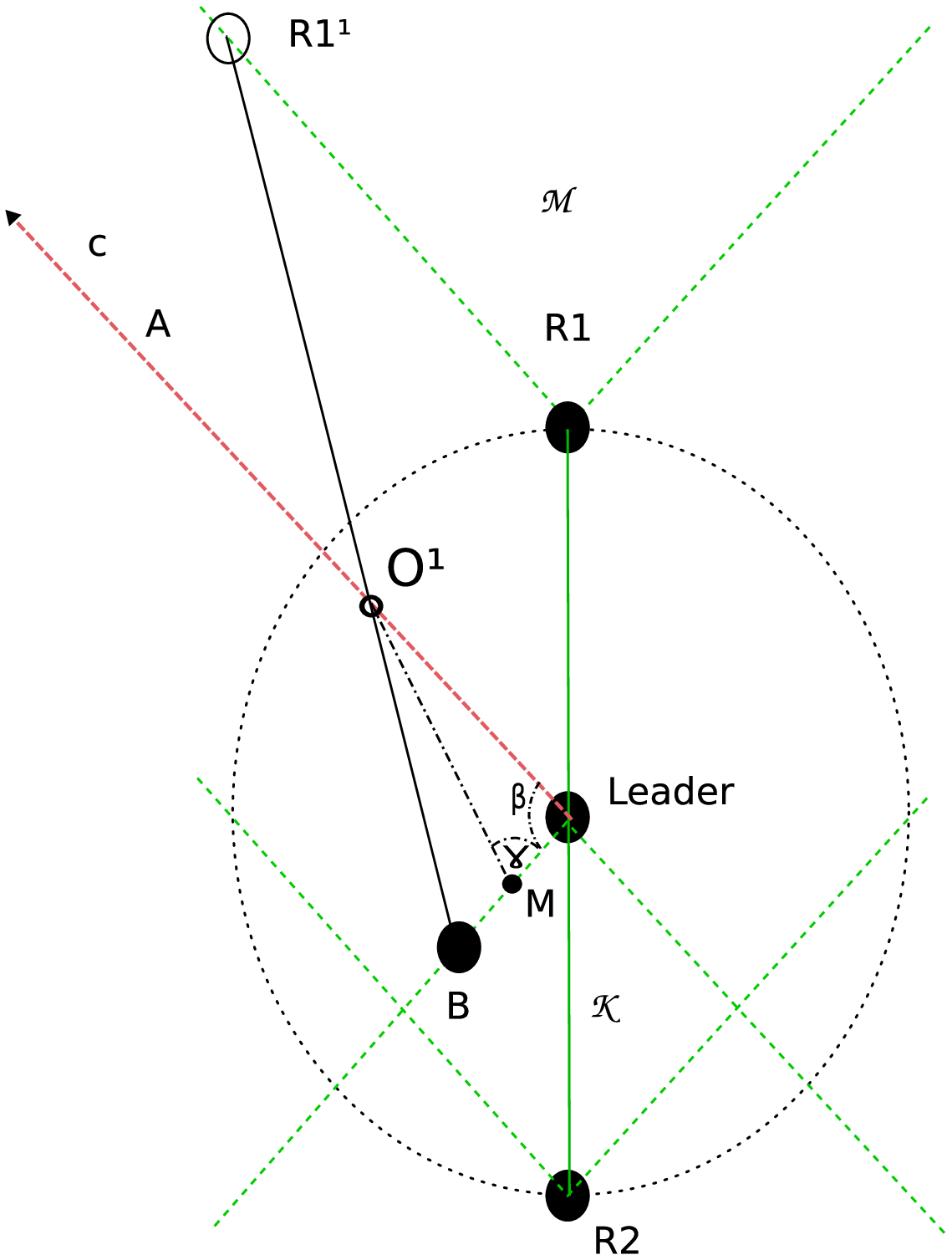}
\end{center}
\caption{Angle $\beta$}\label{AngleBeta}
\end{figure}

In the following we identify a second relation between the axis defining the areas $\mathcal M$ and $\mathcal K$ in order to verify the conditions  of Lemma \ref{lemma:1}.

\begin{lemma}
\label{TheSameRefLead}
After the movement of $R1$ or $R2$, the Leader is preserved and also the references, if $\beta \geq 90^0$ (Figure $\ref{AngleBeta}$).
\end{lemma}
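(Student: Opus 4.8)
The plan is to first peel off everything about this claim that is already settled and reduce it to a single geometric fact about the $Leader$. The references are the triple $\{Leader, R1, R2\}$: that $R1$ and $R2$ stay in $Far\ Robots$ is exactly the remark following Lemma \ref{NoOutSECR2}, and that they remain the endpoints of a diameter of the $SEC$ (so that the center is determined from them) is precisely Lemmas \ref{NoOutSEC} and \ref{NoOutSECR2}, which supply the condition $\alpha\geq 90^0$ and the bound on $d$. Hence it suffices to show that the $Leader$ is preserved, i.e. that condition 3 of Lemma \ref{lemma:1} is maintained after the move: the circle of radius $dist(Leader,O)$ centered at the new $SEC$ center stays empty. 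By the motivation given in the proof of Lemma \ref{lemma:1}, this is equivalent to saying that the $Leader$ remains the robot closest to the center of the $SEC$.

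Next I would fix coordinates with the $SEC$ center $O$ at the origin and $R1R2$ on the $y$-axis. In the $Flocking\ Formation$ the $Leader$ sits at $O$ and every pattern robot lies in $\mathcal{K}$, hence in $y<0$, below the lines $y=\pm h^{\prime}x$. When $R1$ moves to $R1^{\prime}\in\mathcal{M}$ the new center is $O^{\prime}=\tfrac{1}{2}(R1^{\prime}+R2)$, so $\vec{OO^{\prime}}=\tfrac{1}{2}\vec{R1R1^{\prime}}$ points into the upward wedge $\mathcal{M}$ translated to the origin, while the $Leader$ is still at $O$, at distance $|OO^{\prime}|$ from $O^{\prime}$. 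The $Leader$ therefore stays the closest robot to $O^{\prime}$ as long as $|O^{\prime}O|\leq |O^{\prime}P|$ for every pattern robot $P$, and a short computation in the triangle $O^{\prime}OP$ shows this is equivalent to $\angle O^{\prime}OP\geq 90^0$.

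The core of the argument is then to reduce this to a single extremal configuration and read off $\beta$. The angle $\angle O^{\prime}OP$ is smallest when $\vec{OO^{\prime}}$ is pushed as close as possible to the direction of $P$: this occurs when $R1^{\prime}$ lies on a border $y=\pm kx+R1$ of $\mathcal{M}$, so that $\vec{OO^{\prime}}$ is parallel to $y=\pm kx$, and when $P$ lies on the inner border $y=\pm h^{\prime}x$ of $\mathcal{K}$. By the definition of $\beta$ as the angle between $y=\pm h^{\prime}x$ and $y=\mp kx$ (Figure \ref{AngleBeta}), the angle $\angle O^{\prime}OP$ in this extremal case is exactly $\beta$. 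Consequently $\beta\geq 90^0$ forces $\angle O^{\prime}OP\geq 90^0$ in the worst case, hence for every admissible $R1^{\prime}\in\mathcal{M}$ and every $P\in\mathcal{K}$; no pattern robot can become closer to $O^{\prime}$ than the $Leader$, and the $Leader$ is preserved.

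Finally I would dispatch the move of $R2$ as the easy case: $R2$ slides along the $y$-axis toward $R1$, so $O^{\prime}$ moves straight up from $O$ while every pattern robot remains below the lines $y=\pm h^{\prime}x$; the angle $\angle O^{\prime}OP$ is then automatically obtuse, so the $Leader$ survives the $R2$-move for any $\beta$, confirming that the binding constraint comes from the $R1$-move. The step I expect to be delicate is the monotonicity claim, namely that simultaneously driving $R1^{\prime}$ to the border of $\mathcal{M}$ and $P$ to the inner border of $\mathcal{K}$ really minimizes $\angle O^{\prime}OP$: since $\vec{OO^{\prime}}$ sweeps over the whole wedge $\mathcal{M}$ as $R1^{\prime}$ varies, one must check that both extremizations pull in the same angular direction before identifying the extremal angle with $\beta$.
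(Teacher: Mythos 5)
Your proof is correct and takes essentially the same route as the paper's: both reduce the claim to the $Leader$ (sitting at the old center $O$) remaining the robot closest to the new center $O^{\prime}$, identify the worst case as $R1^{\prime}$ on the border of $\mathcal{M}$ together with a robot on the inner border $y=\pm h^{\prime}x$ of $\mathcal{K}$, recognize the extremal angle there as $\beta$, and dispatch the $R2$ move as automatic. The paper phrases the final geometric step via the perpendicular bisector of $\overline{B\,Leader}$ (isosceles triangle, right angle at the midpoint) rather than your angle-at-$O$ computation, but these are interchangeable; only note that your claimed ``equivalence'' with $\angle O^{\prime}OP\geq 90^0$ holds in the direction the lemma needs (sufficiency), the converse being exact only because $R1^{\prime}$ may move arbitrarily far within $\mathcal{M}$.
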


\begin{proof}
Now we will find the smallest value of $\beta$ such that the $Leader$ is preserved.
First, note that the pattern formation algorithm described in the previous sections starts only if the $Leader$ is perpendicular to the $R1R2$ axis, with respect the center of the $SEC$. So, if one of $R1$ or $R2$ moves, the only one that can move after them is the $Leader$. Then, all the other robots must wait until it reaches its final position. After the Leader motion it is still the Leader since it is the closest to the SEC.

Note also that if $R2$ moves, the Leader is always preserved independently of the value of $\beta$. In this case the new center $O^{\prime}$ of the $SEC$ will ever be on the $R1R2$ axis, between the $Leader$ and $R1$. Once again the $Leader$ is preserved since it is the closest robot to $O^{\prime}$.

Cosider the case $R1$ moves. In the worst case $R1$ moves on the border of  $\mathcal{M}$.
Let $R1^{\prime}$ be the next position of $R1$. At each point $R1^{\prime}$ corresponds  a new point $O^{\prime}$ (the middle point of the segment $\overline{R1^{\prime}R2}$) which is the center of the new SEC. If $R1^{\prime}$ moves on the axis $y \geq \mp kx+R1$, then $O^{\prime}$ moves on the parallel axis $y \geq \mp kx$ and if $R1^{\prime}$ goes to infinity, then $O^{\prime}$ also goes to infinity.

Let $B$ be a robot on one of the borders of $\mathcal{K}$, $y=\pm h^{\prime}x$ for $y<0$.
In the following we determin the value of $\beta$ in order to preserve the $Leader$. 
That is, there is no robot closer to $O^{\prime}$ than the $Leader$.
Assume robot $B$ and $Leader$ are equidistant with respect to $O^{\prime}$.
Let $M$ be the middle point of the segment $BLeader$.
Notice that :

1) if  $Leader$ and $B$ are equidistant to $O^{\prime}$ 
then the triangle ($B,O^{\prime},Leader$) is isosceles.

2) if the triangle ($B,O^{\prime},Leader$) is isosceles then the angle $\gamma$ between $Leader,M,O^{\prime}$ is right. It follows that $\beta <90^0$.


So, in order to never have $\overline{O^{\prime}Leader} \leq \overline{O^{\prime}B}$, $\beta \geq 90^0$.

\end{proof}



\begin{lemma}
Starting in any configuration or after any movement of robot $R1$ or robot $R2$ the
the system converges to the flocking specification in O(n) steps.  
\end{lemma}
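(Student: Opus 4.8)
The plan is to prove the lemma as a composition statement over the three self-stabilizing modules built in Sections \ref{sec:common-coordinates}, \ref{sec:flocking} and \ref{sec:motion}, verifying the four clauses of the Uniform Flocking specification (head emergence, pattern emergence, velocity agreement, no collision) one by one while tracking the step count. The statement really covers two regimes: convergence from an \emph{arbitrary} initial configuration, and re-convergence after a \emph{single} move of $R1$ or $R2$. I would prove the arbitrary-start case by chaining the phases, and prove the single-move case as the inductive step that preserves the \emph{Flocking Formation} invariant; the second case is exactly what makes velocity agreement hold.

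For the arbitrary-start case, first I would invoke the correctness of the coordinate-system construction of Section \ref{sec:common-coordinates}: from any configuration the $Far\ Robots$ set shrinks (with high probability) to a unique couple, the SEC centre together with the elected leader fixes the second axis, and the alignment step orients both axes and designates $R1$ as head. This discharges \textbf{head/leader emergence}. Next I would run the bootstrapping and pattern-formation procedures of Section \ref{sec:flocking}; here Lemma \ref{lemma:1} guarantees that placing the shape inside the circle of diameter $\overline{R1R2}$, on the $R2$/negative-$y$ side, with the leader's disk empty, preserves the references and the common frame, which discharges \textbf{pattern emergence}. Because robots are totally ordered by their coordinates in the \emph{common} frame and move strictly in that order, every pairwise trajectory conflict is serialized, which discharges \textbf{no collision}.

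For the velocity-agreement / single-move case I would argue by an invariant: whenever the system is in a valid Flocking Formation and Algorithm \ref{alg:PatMot} fires, the resulting configuration again satisfies the hypotheses of Lemma \ref{lemma:1}. If $R1$ moves inside $\mathcal{M}$, Lemma \ref{NoOutSEC} (using $\alpha \geq 90^0$) keeps all robots inside the new SEC, and Lemma \ref{TheSameRefLead} (using $\beta \geq 90^0$) keeps the leader and the references; if $R2$ moves inside $\mathcal{K}$ by at most the bound of Lemma \ref{NoOutSECR2}, the same two conclusions hold. In either case the leader then returns to the SEC centre and the $n-3$ pattern robots re-run the formation step to realign to the new $R1R2$ axis, restoring the Flocking Formation; this is precisely the convergence required by the \textbf{velocity agreement} clause. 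For the step bound I would note that each formation or re-formation moves the $n-3$ non-reference robots one at a time along the fixed total order, and the bootstrapping likewise places robots in a linear number of rounds, so each convergence costs $O(n)$ steps.

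The hard part will be the asynchrony and obliviousness of the CORDA model rather than the geometry. I would have to show that the blocking/deblocking condition in Algorithm \ref{alg:PatMot} --- ``$R1$ and $R2$ stay frozen until the Flocking Formation is reformed'' --- is decidable from a single snapshot, so that a robot waking mid-cycle never misfires; that the total order is stable while robots are in motion, so collision-freedom survives robots being observed while moving; and that the composition does not deadlock, i.e. at every reachable configuration exactly one guard of the algorithm is enabled for the robots permitted to move. I would close these points by arguing that each guard is a predicate on the current observation alone and that the invariants of Lemmas \ref{NoOutSEC}, \ref{NoOutSECR2} and \ref{TheSameRefLead} make the guards mutually exclusive along any reachable execution.
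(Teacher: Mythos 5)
Your proposal follows essentially the same route as the paper's proof: the paper likewise splits the argument into (i) convergence from an arbitrary configuration to the \emph{Flocking Formation} in $O(n)$ steps, delegated to the annex analyses of the bootstrapping and pattern-formation algorithms, and (ii) re-convergence after a move of $R1$ or $R2$, where the displaced SEC centre invalidates the formation, the guard of Algorithm \ref{alg:PatMot} freezes $R1$ and $R2$, and the formation algorithm restores the pattern in $O(n)$ steps. You are in fact more explicit than the paper in citing Lemmas \ref{NoOutSEC}, \ref{NoOutSECR2} and \ref{TheSameRefLead} for reference preservation, and more candid in noting that the first-axis agreement is only probabilistic, so the unconditional $O(n)$ bound from an arbitrary start really applies only once the references have emerged.
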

\begin{proof}
Starting in a configuration that is not a $Flocking \ Formation$ the system converges in $O(n)$ to a $Flocking \ Formation$(the analysis is provided in the Annexes). 
After the movement of either $R1$ or $R2$, the center of the $SEC$ will change. It follows that the  $Leader$ is not anymore on the center of the SEC and the $Flocking \ Formation$ is invalide. Now, due to the first condition in Algorithm \ref{alg:PatMot} any movement of $R1$ or $R2$ is impossible. Then the flocking formation algorithm is executed until a new $Focking \ Formation$, consistent with the new references, will be reached. Following the analysis provided in the Annexes this takes O(n) steps. Then, $R_1$ and $R_2$ are again free to move.
\end{proof}

\section{Conclusions and open questions}
The current work studies the flocking problem in the most generic settings: asynchronous robot networks, oblivious robots, arbitrary initial positions. We proposed a solution with nice self$^\ast$ properties. That is,  in the event of the head leave the flock agrees on another head. Moreover,
robots are oblivious, can start in any initial configuration and do not share any common knowledge.   Additionally,  the flock follows the head whatever its trajectory. Also, the algorithm makes sure that the flock will follow the same head (once emerged) and the system of coordinates does not change during the execution. To this end we identified the necessary conditions that both the pattern and the head velocity have to satisfy in order to maintain the flock pattern, the same unique head and the same coordinate system.

A nice extension of this problem would be to use energy constraints as  in  biological systems. There, in order to conserve the energy of the group,  the head is replaced from time to time. Including energy considerations in the model is a challenge in itself. Also, assuming that heads may change in order to conserve the energy of the group it is also assumed that some kind of common knowledge is shared by all the members of the group. This common knowledge may help in bypassing the impossibility results related to symmetric configurations.
Another interesting extension would be the volumic model. The current solution cannot work in these settings since it is based essentially on alignement properties.

{\bf Acknowledgements}
The last author would like to thank Ted Herman for helpfull discussions related to  flocking in biological systems that inspired the current specification and also the open question stated in the Conclusion section.

\bibliographystyle{plain}
\bibliography{reference}
\newpage
\section{Appendix}
\section{Setting up a common coordinate system}
\label{sec:common-coordinates-details}
The construction of a common coordinate system is built gradually. First, robots 
agree on one axis, then they agree on the second axis. 

\subsection{Agreement on the first axis}
One axis is defined by two distinct points. 
In order to get two common points, robots move in order to distinguish 
an unique couple. The algorithm idea is very simple: robots 
compute the barycentre of their convex hull. The furtherest robots 
with respect to the barycentre
(if their number is greater than two) probabilistically 
moves further from the barycentre along the line defined by themselves 
and the barycentre.

\begin{algorithm}		
	\begin{quote}

\textbf{Functions:\\}
\textit{Far}(myself) returns true if $\exists r_i, d(myself,r_i) \geq d(r_w,r_k) \forall w,k$ robots\\
\textbf{Procedure:\\}
\textit{FarRobots()} returns the set of robots, $r$, Far(r)=true\}
	\begin{tabbing}
		1) Compute the distance $d(myself,r_i)$ between myself and each robot $r_i$, \\
		2) FarRobots()\\
		3) \textbf{if} \= ($myself$ $\in$ \textbf{FarRobots} $\wedge$ $\parallel \textbf{FarRobots} \parallel >2$)\\
			\>  \textbf{then}  $\{$ \=compute the baricentrum of FarRobots()\\
					\> \> move away from the barycentre with probability \\ \> \>																					$p>0$ of a distance $d\_myself \cdot p$\}  
\end{tabbing}
\end{quote}
\caption{Robot Separation executed by robot {\it myself}}
\label{alg:RobSep}
\end{algorithm}

\begin{definition}
Two robots $r_i$ and $r_j$ belong to the set of the \textbf{Far Robots} if dist(i, j) $\geq$ dist(w,k) $\forall$ w,k robots in the system.
\end{definition}

\begin{definition}[legitimate configuration]
\label{legconf}
A legitimate configuration for Algorithm \ref{alg:RobSep} is a configuration with only two \textbf{Far Robots}.
\end{definition}

\begin{lemma}
Starting from a configuration with more than two robots belonging to the set \textbf{Far Robots}, the system executing Algorithm \ref{alg:RobSep} converges to a legitimate 
configuration (see Definition \ref{legconf}) with high probability.
\end{lemma}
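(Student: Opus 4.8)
The plan is to analyze the execution as a configuration-valued stochastic process and to prove a \emph{uniform, bounded-below reachability} of a legitimate configuration (Definition \ref{legconf}), which then amplifies over repeated windows into convergence with high probability. First I would record that a legitimate configuration is a fixed point of Algorithm \ref{alg:RobSep}: when only two robots are Far the guard $\|\textbf{FarRobots}\|>2$ fails, no robot is enabled, and the configuration is preserved thereafter; hence the legitimate set is absorbing and it suffices to show the process enters it. Since robots are oblivious, the successor of a configuration depends only on the configuration itself together with the scheduler's activation choice and the independent $\mathrm{Bernoulli}(p)$ coin flips of the activated robots, so the evolution is Markovian on the space of configurations, and I assume the scheduler is fair (each robot is activated infinitely often).

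For the progress step, let $D$ be the current diameter (maximal pairwise distance) and let $m$ be its \emph{multiplicity}, the number of unordered pairs realizing $D$; the set of Far Robots is precisely the set of endpoints of these pairs, and a configuration is legitimate exactly when $m=1$. Because a radial-outward move of a single Far Robot $r$ changes only the distances incident to $r$, the diameter can never decrease; I would first show that, from any configuration with $m\ge 2$, some Far Robot's outward move strictly increases $D$, after which every diameter-realizing pair is incident to the moved robot. Isolating such a move is cheap probabilistically: for any activation set containing $r$, the outcome in which $r$ flips heads while every other activated Far Robot flips tails makes $r$ the unique mover and has probability at least $p(1-p)^{k-1}>0$, where $k\le n$ is the number of Far Robots; fairness guarantees $r$ is eventually activated. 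Iterating over the at most $n$ candidate endpoints and breaking the residual distance coincidences, I would argue that a sequence of $O(n)$ such single moves reaches a configuration whose diameter is realized by a unique pair, i.e. $m=1$.

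Concatenating the local steps yields the key uniform statement: from \emph{every} configuration there is a scheduler-compatible, favorable coin sequence of length $N=O(n)$ that drives the system into a legitimate configuration, with probability bounded below by a constant $\epsilon=\epsilon(p,n)>0$ independent of the configuration. Partitioning the execution into windows of $N$ activations and using that the legitimate set is absorbing, this gives $\Pr[\text{not legitimate after } jN \text{ activations}]\le(1-\epsilon)^{j}$ for every fair scheduler, since at the start of each window the current configuration admits a favorable continuation that reaches and is trapped in a legitimate configuration within that window with probability at least $\epsilon$. Letting $j\to\infty$ makes the probability of never stabilizing vanish, which is exactly convergence to a legitimate configuration with high probability.

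I expect the main obstacle to be the geometric heart of the local step: that single outward moves strictly increase the diameter and, within boundedly many steps, collapse its multiplicity to one. Symmetric configurations are delicate --- a single outward move of one vertex of an equilateral triangle drops the multiplicity from three to two but leaves all three robots extremal, and a further move by the wrong vertex preserves the remaining symmetry --- so one cannot always reduce $m$ in one move and must instead track a finer ordering of the pairwise distances and exhibit a bounded sequence of well-chosen moves reaching a generic configuration with $m=1$. A secondary difficulty is proper to the asynchronous CORDA model: a robot may be observed while moving and may execute a destination computed from a stale observation, so I must check that such pending moves are finite in number and cannot indefinitely block the favorable single-robot events; this uses scheduler fairness together with the fact that each computed destination depends only on the observed configuration, so the favorable single-robot events retain probability at least $p(1-p)^{k-1}$ regardless of the adversary's interleaving.
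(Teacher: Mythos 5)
The paper does not actually prove this lemma: its entire ``proof'' is the single sentence that the argument follows the same lines as the leader-election proof of \cite{CG07}. Your probabilistic scaffolding --- the legitimate set is absorbing because the guard $\parallel \textbf{FarRobots}\parallel >2$ fails there, obliviousness makes the evolution Markovian, a coin pattern isolating a single mover has probability at least $p(1-p)^{k-1}$, and a uniform lower bound on reaching a legitimate configuration within a window of $O(n)$ activations amplifies to convergence with high probability --- is exactly the style of argument the citation gestures at, and it is sound. In that sense you supply strictly more than the paper does.

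The step you leave open is, as you yourself say, the geometric heart, and it must be closed before this counts as a proof. Half of it is in fact provable and you should make it explicit: if $(r,s)$ is a diameter pair at distance $D$, every robot lies in the intersection of the two disks of radius $D$ centred at $r$ and $s$; that lens lies in the closed half-plane $\langle q-r,\;s-r\rangle\ge 0$ tangent at $r$, and since $s$ itself contributes $D^2>0$ to the average, the barycentre $G$ of the Far Robots satisfies $\langle r-G,\;r-s\rangle>0$. Hence an isolated radial outward move of $r$ strictly increases $\mathrm{dist}(r,s)$, the diameter strictly increases, and every new diameter pair is incident to the moved robot. What remains genuinely unresolved is the collapse of the multiplicity to one: after $r$ moves it may be equidistant, at the new diameter, from two robots $a$ and $b$, and because the step length $d_{\mathit{myself}}\cdot p$ is deterministic you cannot invoke genericity to break that tie; a further isolated move of $r$ preserves a symmetric tie, so you would have to show that an isolated move of $a$ or $b$ eventually yields a unique extremal pair, or identify where additional randomness enters. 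Your proposal names this obstacle but does not resolve it, so as written it is a correct plan with one load-bearing claim unproved. The CORDA caveat you raise (robots observed mid-move executing destinations computed from stale observations) is likewise real and is passed over in silence by the paper.
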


The proof follows the same lines as the leader election proof in \cite{CG07}.  

\subsection{Agreement on the second axis}
The construction of this second axis is conditioned by the existence of 
two unique nodes. We chose these two nodes as follows: one is the centre of smallest 
enclosing circle while the second one is given by any leader election algorithm.  
Several papers discuss the election of a leader. In \cite{CG07} the authors prove 
the impossibility of deterministically electing a leader without additional assumptions 
and propose probabilistic solutions. In \cite{DP07} discuss the conditions to deterministically elect a leader. 

\subsection{Axis orientation and head emergence}
The following strategy, referred as Algorithm \ref{alg:alignment} allows the alignment of points computed with Algorithm \ref{alg:RobSep} with respect to the centre of the smallest enclosing circle and the leader. Once the alignment performed this first axis can be easily oriented  using the position of the leader.

\begin{algorithm} 
\label{alg:alignment}
\begin{quote}

\textbf{Functions:\\}
\textit{Lined(A,B,C)} returns true is the three points form a line. \\
\textit{Lined(A,B,C,D)} returns true is the four points form a line. \\
\textit{$choose(Far Robots)$} returns one between the \textit{Far robots}: the far robot with bigger value of $y$ or a bigger value of $x$, in case $y$ values are equal \\

\textit{Note}: We will call the two far robots $R_A$ and $R_B$ only to the sake of precision in the algorithm description. While executing the algorithm robots make no distinction between them.

\begin{tabbing}
1)\textbf{if} \= (R $\in$ \textit{Far Robots} and R  not on the $SEC$)\\
\> \textbf{then}  move to the $SEC$.\\
2) \textbf{if} \= $(\neg Lined(R_{A},R_B,Leader,O)$)\\
3) \>\textbf{if} \= $(\neg Lined(R_{A},Leader,O)$ and $\neg Lined(R_B,Leader,O)$ and $R=Leader$)\\
\> \> \textbf{then} move towards $\overline{OX}$, $X=choose(Far Robots)$ at a distance $dist(Leader,O)$. \\
4) \> \textbf{if} \= $(Lined(R_A,Leader,O)$ and $\neg Lined(R_B,Leader,O)$ and $R=R_B$\\
\> \> \textbf{then}  move towards $\overline{R_ALeader} \cap SEC$. \\

\end{tabbing}
In the sequel we will call $R1$ the $Far  \ Robot$ closest to $Leader$ and $R2$ the other one.
\end{quote}
\caption{Alignment algorithm executed by robot $R$}
\end{algorithm}

The first operation is to make sure that the two elected $Far \ Robots$ are on the $SEC$. Otherwise, they move to the SEC (smallest enclosing circle).
 
If the leader is not aligned with the other two robots then
it will choose between the two robots belonging to the $Far \ Robots$ the one with a bigger value of $x$. In case of symmetry, a bigger value of $y$. Then, it will move toward the intersection of the radius of that robot and the circumference with center in $O$ (the center of the $SEC$) and radius equals to $dist(O, Leader)$.
Finally,  the robot not chosen by the Leader (referred in the algorithm is $R_B$)  will align with the other two robots. $R_B$ moves only when the Leader is aligned with the other robot. That is,  when one of the two Far robots, sees that the Leader is lined up with the other $Far \ robot$ and $O$, then it moves following the $SEC$  until it form a line with the Leader, the center of the SEC and the other robot in $Far \ Robots$. Recall  that the two $Far \ robots$ moved previously to the $SEC$.

For now on,  the $Far  \ Robots$ nearest to the Leader will be referred as $R1$ (Reference 1) and the other one $R2$. 

\begin{lemma}
During its movement, $R2$, will never collide with another robot.
\end{lemma}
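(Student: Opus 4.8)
The plan is to exploit the two structural facts of the configuration in which $R2$ moves: by the alignment rule $R2$ travels strictly \emph{along} the $SEC$, and by the very definition of the smallest enclosing circle every robot lies inside or on the $SEC$. Consequently a collision can only occur with a robot that also sits on the boundary of the $SEC$; every robot strictly interior to the circle is at a radius from $O$ different from that of $R2$ throughout the motion and can be discarded at once. In particular the $Leader$, which has just moved onto the segment $\overline{OR1}$ at distance $dist(Leader,O)$ less than the radius of the $SEC$, is interior and plays no role.

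I would then identify the arc that $R2$ actually sweeps. By step $4$ of Algorithm \ref{alg:alignment}, $R2$ follows the $SEC$ towards $\overline{R_ALeader}\cap SEC$; since $R_A=R1$, $Leader$ and $O$ are already collinear, this line is the ray through $O$ and $R1$, so the target of $R2$ is the antipode of $R1$ on the $SEC$. Writing angular positions on the $SEC$ relative to $R1$ and using that the chord between two boundary points at central angular separation $\Delta$ has length $2\rho\sin(\Delta/2)$, an increasing function of $\Delta$ on $[0,\pi]$ (here $\rho$ is the radius of the $SEC$), I would translate distances to $R1$ into angular separations from $R1$.

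The core step is to show the swept arc is empty. Let $D=dist(R1,R2)$ be the maximal distance realised by the unique couple of $Far\ Robots$ produced by the first-axis agreement, and let $\Delta_0\in(0,\pi]$ be the angular separation of $R2$ from $R1$, so $D=2\rho\sin(\Delta_0/2)$. Since $R1,R2$ is the farthest pair, any other robot $r$ satisfies $dist(r,R1)\le D$; if moreover $r$ lies on the $SEC$, monotonicity of the chord length forces its angular separation from $R1$ to be at most $\Delta_0$. Hence all boundary robots are confined to the arcs within angular distance $\Delta_0$ of $R1$, whereas $R2$ moves outward from separation $\Delta_0$ to the antipode at separation $\pi$, i.e. through the complementary arc, which contains no other robot. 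This yields collision freedom.

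The main obstacle I expect is pinning down the direction of the sweep together with the degenerate cases. I must argue that ``following the $SEC$'' means $R2$ increases its angular separation from $R1$ rather than returning across the populated arc through $R1$: the former reaches the antipode without meeting any robot confined to separation $\le\Delta_0$, while the latter could. I also need to dispatch the boundary situations: a second robot sitting at separation exactly $\Delta_0$ but on the opposite side of $R1$ lies at the far endpoint of the clear arc and is never swept; if $R2$ is already antipodal to $R1$ there is no motion and nothing to prove; and ties in $choose(Far Robots)$ must be resolved consistently. A last minor point to make rigorous is that the first step of the algorithm, which places the two $Far\ Robots$ on the $SEC$, preserves the uniqueness and maximality of the pair $R1,R2$ on which the angular argument relies.
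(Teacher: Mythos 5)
Your proposal is correct and takes essentially the same approach as the paper's own proof: any robot lying on the arc that $R2$ sweeps toward the antipode of $R1$ would be at a larger chord distance from $R1$ than $R2$ itself, contradicting the maximality of the $Far\ Robots$ pair. The only difference is cosmetic --- you parametrize boundary points by the central angle (chord $=2\rho\sin(\Delta/2)$) while the paper uses the inscribed angle at $R1$ (chord $=\mathrm{diameter}\cdot\cos\alpha$) --- and your added care about interior robots and degenerate cases merely tightens an argument the paper leaves informal.
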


\begin{proof}
Assume $R2$ collides with a robot $R3$.
$R3$ will belong to the $SEC$ and it will be closer than $R2$ to the point at the end of the diameter of the $R1$, so the the angle $\alpha_3$ between diameter of $R1$ and $\overline{R1R3}$ will be smaller than the angle $\alpha_2$ between diameter of $R1$ and $\overline{R1R2}$. But distance(R1, R2) is equal to $diameter \cdot cos(\alpha_2)$ that is smaller than  $diameter \cdot cos(\alpha_3)$. It follows that $R3$ will be farther than $R2$ with respect to the $R1$ position which contradicts 
the hypothesis ($R_2$ is the furthest robots with respect to $R1$).
\end{proof}

\section{Bootstraping the flocking pattern}
\label{sec:bootstrapping}
In this section we execute a pre-processing that prepares the set up of
 the flocking pattern used further by the flocking 
algorithm. We build on top of the algorithms described in the previous section.
The pattern includes three robots used as references (called Leader, R1 and R2) and other n-3 robots that can be placed in any shape.  

The bootstraping process takes two phases. First, all robots but the leader will be placed 
on the smallest enclosing circle. Then, the robots on the SEC but R1 will be placed on 
the semi-circle not occupied by the leader.

\subsection{Phase 1: Placement on the Smallest Enclosing Cercle}
 The idea of the algorithm proosed as Algorithm \ref{alg:positioning} is very simple. First, the robots closest to the boundaries of the smallest enclosing circle are placed. Than recursivelly the other robots.  The algorithm avoids collisions and ensures that robots preserve the referenced (e.g. Leader, R1 and R2) computed in the previous section.

\begin{definition}
Let $FreeToMove$ be the set of robots without robots between themselves and the 
$SEC$ (including the border) along the radius passing through them, and that does not belong to the $SEC$.
\end{definition}

\begin{definition}
Let $AlreadyPlaced$ be the set of the robots belonging to the border of the $SEC$. 
\end{definition}

\begin{algorithm}[h]
\begin{quote}

{\bf Preprocessing:}\\
$\forall r_i$ compute the value of the radius passing through $r_i$. Let $rad_{r_i}$ be the 
value of the angle between my radius ($rad_{myself}=0$) and the radius of robot $r_i$, in clockwise direction (note that clockwise depends only on the coordinate system of each robot) \\
$\forall r_i$ compute the value of $dist_{r_i}$, distance of the robot $r_i$ to the border of the smallest enclosing circle ($SEC$)\\

{\bf Predicates:} \\
$Leader(myself) \equiv \forall r_i$ with $i \neq myself$, $dist_i < dist_{myself}$  \\

{\bf Functions:}\\
$OccupiedPosition(rad_{myself}):$ returns $r_i, i \neq myself, dist_{r_i}=0 ~and ~ rad_{r_i}=rad_{myself}$ otherwise $\bot$ \\
$NextToMove:$ returns the set of closest robots $r$ to the $SEC$ with $dist_{r} \neq 0$\\

1) \textbf{if} $\neg Leader(myself) \wedge myself \in FreeToMove$\\  
	\textbf{then} $\{$ move to $SEC$ with distance $dist_{myself}\}$\\
2) \textbf{if} ($\neg Leader(myself) \wedge (myself \in NextToMove) \wedge (FreeToMove=\emptyset) 
\wedge (OccupiedPosition(rad_{myself}) \neq \bot$))\\
	\textbf{then} $\{$ Move to the first quarter point of the arch between robot OccupiedPosition($rad_{myself}$) 
and robot $r_j$ belonging to the $SEC$ such that $rad_j$ is minimum.$\}$
\end{quote}  
\caption{Placement executed by robot $my\_self$}
\label{alg:positioning}
\end{algorithm}

\begin{definition}
A legitimate configuration for Algorithm \ref{alg:positioning} is a configuration 
where all robots but the leader are in the set $AlreadyPlaced$. 
\end{definition}

Note that the algorithm does not change the leader position neither the position of $AlreadyPlaced$ nodes ($R1$ and $R2$ belong to the set of the $Already Placed$).
Moreover, there is no node behind the leader and sharing the same radius as the leader. Otherwise 
this node will be the closest to the center of the $SEC$.

\begin{lemma}
Algorithm \ref{alg:positioning} is silent.
\end{lemma}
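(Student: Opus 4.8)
The plan is to read ``silent'' in the self-stabilizing sense: starting from an arbitrary configuration the system reaches a configuration in which no guard of Algorithm~\ref{alg:positioning} is enabled, and then stays there. I would therefore prove the two usual ingredients, \emph{closure} (the target configuration is a fixed point) and \emph{convergence} (it is reached after finitely many moves), taking the legitimate configuration as the unique quiescent one. Closure is immediate: in a legitimate configuration every non-leader robot has $dist=0$, so none belongs to $FreeToMove$ (which by definition excludes robots already on the $SEC$), disabling guard~(1); and the only robot with $dist\neq 0$ is the leader, so no robot satisfying $\neg Leader(myself)$ lies in $NextToMove$, disabling guard~(2). Hence no robot moves.

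Before arguing convergence I would fix three invariants that make progress monotone. First, the $SEC$ never changes: it is the circle with diameter $\overline{R1R2}$, and $R1,R2$ lie on it (so $dist=0$) while both guarded actions require $dist\neq 0$; moreover no action ever sends a robot strictly outside the current disk, so the enclosing circle can neither be redefined nor grow. Second, the leader is preserved: it never moves (both guards require $\neg Leader(myself)$), every moving robot is heading to a point of the $SEC$ (final $dist=0$) and, being drawn from $NextToMove$, stays in the outer region of the disk throughout its move, so no robot ever comes closer to the centre than the leader. Third, $AlreadyPlaced$ is non-decreasing: once a robot reaches the $SEC$ it has $dist=0$ and neither guard ever applies to it again.

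For convergence I would use the potential $P$ equal to the number of non-leader robots with $dist\neq 0$. By the third invariant $P$ never increases, and the goal is $P=0$. While $P>0$ I would show a robot is enabled whose completed move lands it on the $SEC$. Two cases arise. If $FreeToMove\neq\emptyset$, any robot $r$ in it satisfies guard~(1): nothing lies between $r$ and the border, so $r$ moves radially onto the $SEC$. If $FreeToMove=\emptyset$, then each robot of $NextToMove$ (those closest to the border) has its radial landing position occupied, i.e. $OccupiedPosition(rad)\neq\bot$, so guard~(2) applies and sends it to the quarter point of the arc between that occupied position and the nearest $SEC$ robot --- again a point of the $SEC$; collision-freeness of these moves follows along the lines of the earlier non-collision argument for $R2$. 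In both cases one more robot acquires $dist=0$ and joins $AlreadyPlaced$, so $P$ strictly decreases. Since initially $P\le n-1$, after at most $n-1$ completed moves $P=0$ and the silent legitimate configuration is reached.

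The delicate point, and the one I expect to be the real obstacle, is the interaction of CORDA asynchrony with obliviousness. A robot may be observed or stopped in mid-motion, so a single activation need not place it exactly on the $SEC$, and because robots keep no memory the destination is recomputed from the current configuration at every cycle and may shift as others move. I would control this per round (each robot completing at least one look--compute--move cycle per round): the closest pending robots make geometric progress towards the border at each activation, bounded below by the minimum move, so a robot with a fixed target reaches the $SEC$ in finitely many activations; and the quarter-point rule guarantees that whenever guard~(2) fires the set of distinct occupied angular positions on the $SEC$ strictly grows, a quantity capped by $n-1$. Thus guard~(2) cannot fire indefinitely, every pending robot eventually finds a free radial slot so that guard~(1) finishes its placement, and the strictly increasing, bounded $|AlreadyPlaced|$ forces quiescence.
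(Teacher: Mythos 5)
Your proof is correct, but you prove considerably more than the paper does for this lemma. The paper's own proof is a two-line closure observation: the leader never moves, and once every non-leader robot is in $AlreadyPlaced$ (i.e.\ on the $SEC$, so excluded from $FreeToMove$ and with $dist=0$, hence outside $NextToMove$), neither guard is enabled and the algorithm is quiescent --- exactly the first paragraph of your argument. Everything you add afterwards (the invariance of the $SEC$ and of the leader, the monotonicity of $AlreadyPlaced$, the potential function counting non-leader robots with $dist\neq 0$, and the discussion of interrupted moves under CORDA) is the content of the \emph{separate} convergence lemma that the paper states and proves immediately after this one (convergence to a legitimate configuration in $O(n)$ steps, supported by its Lemmas on $FreeToMove$, free positioning, and collision-freeness). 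So you have effectively merged two of the paper's lemmas into one; that is a legitimate and arguably more honest reading of ``silent'' in the self-stabilizing sense (closure \emph{and} convergence), whereas the paper uses ``silent'' only for the fixed-point property. One small step you assert without justification, which is worth spelling out: when $FreeToMove=\emptyset$, a robot $r$ in $NextToMove$ really does have $OccupiedPosition(rad_r)\neq\bot$, because any robot blocking $r$ along its radius is strictly closer to the border, and since $r$ is among the closest robots with $dist\neq 0$, that blocker must have $dist=0$, i.e.\ it sits on the $SEC$ at $r$'s radial landing point. With that filled in, your argument stands.
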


\begin{proof}
The leader does not change its position and
once all robots but the leader are in the set
$AlreadyPlaced$ no robot can execute its actions so the algorithm is silent.
\end{proof}

\begin{lemma}
\label{lemma:freetomove}
If two robots $r_i$ and $r_j$ belong to the set $FreeToMove$, than their final position will be different.
\end{lemma}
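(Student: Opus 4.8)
The plan is to reduce the whole statement to a single geometric fact: a robot that executes line~1 of Algorithm~\ref{alg:positioning} is moved radially outward to the unique point where the ray from the center $O$ of the $SEC$ through its current position meets the circumference, and this destination is determined solely by the angular coordinate $rad_r$ of that ray. Consequently, two robots can only agree on a final position if they lie on one and the same radius, and I will show that this configuration is impossible for two members of $FreeToMove$.

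First I would pin down the destination map precisely. For a robot $r$ located at a position $p_r \neq O$, the instruction ``move to $SEC$ with distance $dist_{myself}$'' sends it along the radial direction $\overrightarrow{Op_r}$ until it reaches the boundary, so its final position $P_r$ is the intersection of the ray $\overrightarrow{Op_r}$ with the circumference of the $SEC$. The point to emphasize is that $P_r$ depends only on the \emph{direction} of $p_r$ as seen from $O$, i.e. only on $rad_r$; the distance of $r$ from $O$ plays no role. Hence $P_{r_i}=P_{r_j}$ forces $p_{r_i}$ and $p_{r_j}$ to lie on the same radius segment $[O,P]$, with $P:=P_{r_i}=P_{r_j}$.

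The heart of the proof is then a short contradiction argument. Suppose $r_i,r_j\in FreeToMove$ share a final position $P$. By the previous step both $p_{r_i}$ and $p_{r_j}$ lie on $[O,P]$. Since the two robots occupy distinct positions, one of them, say $r_i$, is strictly closer to $O$; then $p_{r_j}$ lies strictly between $p_{r_i}$ and $P$, that is, strictly between $r_i$ and the boundary of the $SEC$ along the radius passing through $r_i$. This exhibits a robot lying between $r_i$ and the $SEC$ on $r_i$'s own radius, directly contradicting $r_i\in FreeToMove$ by the definition of that set. Therefore no two robots in $FreeToMove$ can have the same final position.

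The only real care needed is in the degenerate cases, which I expect to be the main (though minor) obstacle, since they are where the clean ``one radius'' picture could break down. A robot located exactly at $O$ has no well-defined radius, and two robots sharing the same position would trivially share a destination; both situations are excluded by the standing assumption that the robots occupy distinct positions in the plane, together with the fact that any robot considered here is strictly inside the $SEC$ and distinct from the center, so that $\overrightarrow{Op_r}$ and its boundary hit $P_r$ are well defined for every robot under consideration. With these exclusions in place, the contradiction above is complete and the lemma follows.
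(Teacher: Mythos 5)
Your proof is correct and is essentially the contrapositive rendering of the paper's own argument: the paper argues directly that two $FreeToMove$ robots must lie on different radii (else one would block the other) and that distinct radii yield distinct destinations on the $SEC$, which is exactly the geometric fact you establish by contradiction. The extra attention you give to degenerate cases (a robot at $O$, coincident positions) is a harmless refinement, not a different approach.
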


\begin{proof}
If there are no robots between $r_i$ and the $SEC$ along $radius_i$ and there are no robots between 
$r_j$ and the  $SEC$ along $radius_j$ than $radius_j$ and $radius_i$ must be different. 
To different radius correspond different positions on the $SEC$ so $r_i$ and $r_j$, thanks  to Rule 1, 
will be placed on different positions.
\end{proof}

\begin{lemma}
\label{lemma:freepositioning}
A robot always moves towards a free position on the $SEC$.\\
\end{lemma}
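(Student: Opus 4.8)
The plan is to prove the statement by a case analysis on which of the two rules of Algorithm~\ref{alg:positioning} the moving robot executes, showing in each case that the destination point on the $SEC$ is occupied by no other robot.

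First I would handle Rule~1, i.e.\ the move of a robot in $FreeToMove$. By the very definition of $FreeToMove$, the segment of the robot's own radius running from the robot up to \emph{and including} the border point contains no other robot; since Rule~1 sends the robot exactly to that border point (at distance $dist_{myself}$ along its radius), the target is vacant and hence free. To exclude the possibility that two robots aim at the same border point, I would invoke Lemma~\ref{lemma:freetomove}, which guarantees that two robots of $FreeToMove$ lie on distinct radii and therefore compute distinct destinations on the $SEC$.

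Next I would treat Rule~2, which is enabled when $FreeToMove=\emptyset$, the robot belongs to $NextToMove$, and $OccupiedPosition(rad_{myself})\neq\bot$, meaning that the border point on the robot's own radius is already taken. In this case the robot moves to the first quarter point of the arc delimited by $OccupiedPosition(rad_{myself})$ and the occupied robot $r_j$ of minimum $rad_j$. The decisive observation is that, by the minimality of $rad_j$, $r_j$ is the first already-placed robot encountered when sweeping clockwise from the robot's own radius, so the open arc between these two endpoints carries no placed robot in its interior. The first quarter point lies strictly inside this open arc and coincides with neither endpoint, so the destination is free.

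The step I expect to be the main obstacle is the Rule~2 case: I must make precise, using the clockwise angular order encoded by the $rad$ values, the claim that the arc between $OccupiedPosition(rad_{myself})$ and $r_j$ has empty interior, and verify that the first quarter point genuinely falls in the open arc rather than on an endpoint (in particular in the degenerate configurations where only one border position is occupied, or where all but one are occupied). Once the emptiness of the open arc is established, the freeness of the quarter point follows immediately, and combined with the Rule~1 analysis this yields the claim for every move the algorithm can perform.
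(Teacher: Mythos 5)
Your case split on the two rules and your Rule~1 argument (destination vacant by definition of $FreeToMove$, distinct destinations via Lemma~\ref{lemma:freetomove}) match the paper. For Rule~2, however, you analyse a single robot in isolation, and that misses what the paper's proof treats as the worst case and spends most of its effort on: because robots share no common coordinate system, ``clockwise'' is a private notion, so two robots $r_i$ and $r_j$ executing Rule~2 concurrently can each perceive the \emph{other's} occupied radius as the next one, and both move into the \emph{same} arc between $r_{ii}$ and $r_{jj}$ --- from opposite ends. Your minimality-of-$rad_j$ argument shows the open arc contains no already-placed robot, but it does not show the destination remains free against a simultaneous mover, nor that the two targets and trajectories are disjoint. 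The paper resolves this by the first-quarter restriction: each robot confines itself to the quarter of the arc adjacent to its own endpoint (the $\mathcal{K}$ zone of Figure~\ref{fig:positioning}), and two first-quarters taken from opposite ends of the same arc cannot intersect. This is precisely the fact that the subsequent Lemma~\ref{lemma:collision_free} imports from this lemma, so omitting it leaves the later proof unsupported.

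Two smaller omissions: the paper also argues that the whole \emph{sector} between $r_{ii}$ and $r_{jj}$ (not merely the arc) is free of robots --- otherwise such a robot, being closer to the $SEC$, would be in $NextToMove$ and move first --- which matters because ``moves towards a free position'' implicitly concerns the trajectory and not only the endpoint; and in the CORDA model a robot may be stopped by the scheduler before reaching the quarter point, so you need the paper's observation that a half-moved robot re-enables a rule at its next activation and still heads to the $SEC$. The degeneracy you flag (quarter point coinciding with an endpoint) is a non-issue since $OccupiedPosition(rad_{myself})$ and $r_j$ are distinct points on the $SEC$, so the arc is nondegenerate; the real obstacle is the concurrent, oppositely-handed mover.
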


\begin{proof}
Rule 1 moves robots in the set $FreeToMove$. By definition  
these robots move only on free positions on the $SEC$.

Some robots not in the set $FreeToMove$ are allowed to move only when the set $FreeToMove$ 
is empty and they are in the set $NextToMove$.
Let $r_i$ and $r_j$ be such robots. $r_i$ and $r_j$ are enabled for the Rule 2 and move towards 
the quarter point of the arc between the robot $r_{ii}$ (and $r_{jj}$) and their next robot on the border of $SEC$. The worst case is if $r_i$ and $r_j$ can move together and their own system of coordinate is such that the next robot for $r_i$ is $r_{jj}$ and the next for $r_{jj}$ is $r_{ii}$ (see Figure \ref{fig:positioning}).

Firstly we can notice that the arc between the two robots is free before their movement ( otherwise  $r_j$ would not be the next one of $r_{ii}$ and vice versa). 
Moreover, the sector between $r_{jj}$ and $r_{ii}$ is free.
Assume a robot exists in that area. Following Rule 2 it must  go towards $SEC$ before $r_i$ moves and hence becoming the next robot of the $SEC$ after $r_{ii}$. 
Now $r_i$ and $r_j$ can move freely in this sector, the only problem can be due to a reciprocal collision.
But, since $r_i$ and $r_j$ can move only in the first quarter of their area, starting from an opposite side (the $\mathcal{K}$ zones of Figure 2), then they cannot collide. 
If instead one of the robots cannot reach the quarter point in one move (due to scheduler interference), at its next schedule it will still verify the conditions of Rule 1, so it will move to the $SEC$ following this same rule. 	
\end{proof}

\begin{figure}
\begin{center}
\begin{pspicture}(10,5)
\pswedge[fillstyle=vlines, linecolor=white](5.24,5.24){4.8}{283}{318}
\pspolygon*[fillcolor=white, linecolor=white](5.24,5.24)(6.2,0.5)(7.05,3.6)

\pswedge[fillstyle=vlines, linecolor=white](5.24,5.24){4.8}{211}{245}
\pspolygon*[fillcolor=white, linecolor=white](5.24,5.24)(3.35,4.1)(3.2,.9)

\psarc{<-<}(5.24,5.24){4.8}{200}{0}
\uput{.1}[225](5.3,5.6){$O$}
\pscircle*(8.8,2){0.25}
\uput{.1}[225](9.8,2){$r_{ii}$}
\psline (5.24,5.24)(8.8,2)
\pscircle*(1.1,2.8){0.25}
\uput{.1}[225](1,2.3){$r_{jj}$}
\psline[linecolor=black] (5.24,5.24)(1.1,2.8)

\pscircle(3.35,4.1){0.25}
\uput{.1}[225](3,4.5){$r_j$}
\pscircle*(3.2,.9){0.25}
\uput{.1}[225](3,.5){$r_j$}
\psline[linecolor=red]{->}(3.35,4.1)(3.2,.9)

\pscircle(7.05,3.6){0.25}
\uput{.1}[225](7.65,4){$r_i$}
\pscircle[linestyle=dotted](6.2,0.5){0.25}
\pscircle(6.56,2){0.25}
\uput{.1}[225](6,2.1){$r_i$}
\psline[linestyle=dotted, linecolor=black] (7.05,3.6)(6.2,0.5)
\psline[linestyle=dashed, linecolor=black] (5.24,5.24)(7,0.85)
\pscircle*(7,0.85){0.25}
\uput{.1}[225](7.5,0.5){$r_i$}
\psline [linecolor=red]{*->}(7.05,3.6)(6.56,2)
\psline [linecolor=red]{->}(6.56,2)(7,0.85)
\psframe [fillstyle=vlines](10,0)(11,.5)
\rput (11.7,.2){\textit{zone} $\mathcal{K}$}
\end{pspicture}
\end{center}
\caption{  }
\label{fig:positioning}
\end{figure}

\begin{lemma}
\label{lemma:collision_free}
Algorithm \ref{alg:positioning} is collisions free 
(two robots never move towards the same free position).
\end{lemma}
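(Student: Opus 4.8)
The plan is to reduce collision-freedom to the distinctness of the chosen target positions, and then to establish that distinctness by a case analysis on which of the two rules each of the two robots executes. First I would invoke Lemma \ref{lemma:freepositioning}, which already guarantees that every moving robot targets a \emph{free} position on the $SEC$; what remains is therefore to show that no two robots select the \emph{same} free position. I would organize the argument around the three possible pairs of rules (Rule~1/Rule~1, Rule~2/Rule~2, and the mixed Rule~1/Rule~2).

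Suppose first that both robots $r_i$ and $r_j$ move under Rule~1. Then both belong to $FreeToMove$, and Lemma \ref{lemma:freetomove} applies directly: two $FreeToMove$ robots must lie on distinct radii, since if they shared a radius the one farther from the $SEC$ would have the other between itself and the boundary and so could not be in $FreeToMove$; distinct radii intersect the $SEC$ in distinct points, so the two targets differ. Next, suppose both move under Rule~2. This is exactly the situation examined in the proof of Lemma \ref{lemma:freepositioning}: Rule~2 fires only when $FreeToMove=\emptyset$, each acting robot moves to the first quarter-point of the arc bounded by its $OccupiedPosition$ and its nearest placed neighbour, and in the worst case where the two robots swap roles (the next robot of $r_i$ being $r_{jj}$ and conversely, as in Figure \ref{fig:positioning}) they enter their respective arcs from opposite ends and each travels only the first quarter, so their targets stay separated. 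I would reuse that geometric argument here.

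The main obstacle, and the case I would treat most carefully, is the mixed one, where $r_i$ moves under Rule~1 and $r_j$ under Rule~2. Here I would exploit the guards: Rule~2 is enabled only when the acting robot observes $FreeToMove=\emptyset$, whereas a Rule~1 mover is by definition a member of $FreeToMove$. In a single coherent observation the two rules cannot both be enabled, so a collision could arise only through the asynchrony of the CORDA model, when the two robots act on observations taken at different instants. I would close this gap by noting that a Rule~2 target is a quarter-point lying strictly in the interior of an arc of the $SEC$ delimited by two already-placed robots, a region the algorithm keeps empty of placed robots, while a Rule~1 target is the intersection of the mover's own radius with the $SEC$. Since, as remarked after Algorithm \ref{alg:positioning}, the $AlreadyPlaced$ set and the leader's position are never altered, the radius of the Rule~1 robot cannot coincide with a quarter-point that by construction avoids every occupied radius, so the two targets remain distinct.

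Combining the three cases shows that no two robots ever move toward the same free position, which is the claim. The technical delicacy I expect to spend the most effort on is making the mixed-case argument fully rigorous under asynchrony, i.e. verifying that stale observations cannot let a Rule~1 radius and a Rule~2 quarter-point coincide; everything else follows by direct appeal to Lemmas \ref{lemma:freetomove} and \ref{lemma:freepositioning}.
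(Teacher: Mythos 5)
Your proposal is correct and follows essentially the same route as the paper: the same reduction to distinctness of target positions, the same Rule~1/Rule~1 case via Lemma \ref{lemma:freetomove}, and the same Rule~2/Rule~2 case via the disjoint $\mathcal{K}$-zone argument of Lemma \ref{lemma:freepositioning}. The only difference is that you treat the mixed Rule~1/Rule~2 case explicitly (the paper dismisses it implicitly through the guard $FreeToMove=\emptyset$ and the remark that among robots at different distances only the closest may move), which is a welcome extra precaution rather than a divergence in method.
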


\begin{proof}
Firstly if two robots $r_i$ and $r_j$ start at different distances from the $SEC$, only the one closest to SEC can move.
The other one must wait until the former reaches the circumference.
If the two robots $r_i$ and $r_j$ have the same distance with respect to the $SEC$, than they may move at the same time.
If they are enabled for the Rule 1 then they never collide since they will reach different positions on the 
$SEC$ moving along their respective radius which are different (Lemma \ref{lemma:freetomove}).
If both $r_i$ and $r_j$ are enabled for Rule 2
then both  robots must move towards the quarter point between the robot $r_p$ on $SEC$ belonging 
to its radius and the next robot $r_{Pnext}$ on the $SEC$ in their own clockwise direction.

Following Lemma \ref{lemma:freepositioning} the sector between $r_p$ and $r_{Pnext}$ is free from other robots and 
a robot can move only inside the zone $\mathcal{K}$ (see Figure \ref{fig:positioning}). 
Since $r_i$ and $r_j$ do not belong to the same radius so $r_{p_i} \neq r_{p_j}$ and $r_{{Pnext}_i }\neq r_{{Pnext}_j}$.
So the two $\mathcal{K}$ zone where $r_i$ and $r_j$ can move have no intersection.
It follows,  $r_i$ never reaches $r_{{Pnext}_i}$.

Overall $r_i$ and $r_j$ will never reach the same final position nor 
meet on the way to their respective final positions.
\end{proof}

\begin{lemma}
Algorithm \ref{alg:positioning} converges in $O(n)$ steps to a legitimate configuration. 
\end{lemma}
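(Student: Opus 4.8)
The plan is to exhibit a monotone potential that measures the distance from a legitimate configuration and to bound the number of rounds needed to drive it to zero. I would take $\Phi = (n-1) - |AlreadyPlaced|$, the number of non-leader robots not yet on the SEC, so that a legitimate configuration is exactly $\Phi = 0$. Before anything else I would record two structural invariants. First, the SEC is fixed throughout the execution: the leader never moves, the robots already on the SEC (the set $AlreadyPlaced$, which includes $R1$ and $R2$) never move because neither Rule 1 (whose guard requires $myself \in FreeToMove$, hence being off the SEC) nor Rule 2 (whose guard requires $dist_{myself} \neq 0$) is ever enabled for them, and every other robot only moves toward or onto the SEC, never outside it. Consequently the circle, its centre $O$, and all occupied border positions are stable targets. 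Second, $\Phi$ is non-increasing, since once a robot reaches the SEC it is permanently disabled, so $|AlreadyPlaced|$ can only grow.

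The core of the argument is a progress claim: whenever $\Phi > 0$, within $O(1)$ rounds at least one more robot joins the SEC. I would split on whether $FreeToMove$ is empty. If $FreeToMove \neq \emptyset$, I first note that Rule 2 is disabled everywhere (its guard requires $FreeToMove = \emptyset$), so during this phase no robot shuffles angularly and hence no robot can insert itself between a $FreeToMove$ robot and the SEC along its radius. There is at most one $FreeToMove$ robot per radius (two would contradict the defining property ``no robot between me and the SEC''), so by Lemma \ref{lemma:freetomove} and Lemma \ref{lemma:collision_free} every $FreeToMove$ robot descends its own radius to a distinct free border position without interference; each therefore remains in $FreeToMove$ until it reaches the SEC, which it does within a bounded number of activations, decreasing $\Phi$.

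If instead $FreeToMove = \emptyset$ while $\Phi > 0$, the closest non-placed robots form a nonempty set $NextToMove$, and for each such robot the border position on its radius is occupied, so it is enabled for Rule 2. By Lemma \ref{lemma:freepositioning} such a robot moves, at least partially, into a free sector toward a free border position, after which it satisfies the guard of Rule 1; thus $FreeToMove$ becomes nonempty and the previous case applies. This shows that $\Phi$ strictly decreases within a constant number of rounds and, in particular, rules out any infinite stutter of Rule 2 without an actual placement.

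Combining the two invariants with the progress claim yields the bound: $\Phi$ starts at most $n-1$, never increases, and drops by at least one every $O(1)$ rounds, so a legitimate configuration is reached in $O(n)$ rounds, i.e. $O(n)$ steps. The main obstacle I anticipate is the asynchronous progress part, namely certifying that a descending robot is never re-blocked and that Rule 2 cannot cycle indefinitely; the clean way around both is the observation that Rule 1 and Rule 2 are mutually exclusive by their guards, which freezes all angular shuffling precisely while robots are descending, so that the earlier collision-freedom and free-positioning lemmas suffice and the remaining counting becomes routine.
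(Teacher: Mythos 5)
Your proof is correct and follows essentially the same route as the paper: robots in $FreeToMove$ descend their radii first (Rule 2 being disabled meanwhile), then once that set empties the $NextToMove$ robots advance via Rule 2, and the process iterates, with the collision-freedom lemmas ruling out obstruction. You merely package the paper's informal iteration as a monotone potential $\Phi$ with a per-round progress claim, which is a somewhat cleaner way to extract the $O(n)$ bound but not a different argument.
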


\begin{proof}
Rule 1 allows only robots in $FreeToMove$ to move to the $SEC$.  
Thanks to Rule 2 all robots that don't belong to this set  must wait until that it is empty.
Once this set is empty, robots, excepted the leader, that are not on the $SEC$ can execute the Rule 2. 
This rule is such that it can be executed only by the set of robots $i$ with $min(dist_i)$ and $dist_i\neq0$. 
Now these robots can move towards the middle point of the arc of $SEC$ between the robot on 
$SEC$ belonging to its radius and the next robot on the $SEC$  in clockwise direction.
Once these robots have arrived on the $SEC$ and their corresponding $dist=0$, other robots 
can satisfy Rule 2 and move to the $SEC$.
This process will be iterate until all the robots except the Leader are on the $SEC$.
Following Lemma \ref{lemma:collision_free} robots do not collide and no 
robot obstructs the trajectory of other robots.

In the worst case the algorithm converges in O(n) steps. 
\end{proof}

\subsection{Phase 2: Setting the circular flocking configuration}
This phase is  a pre-processing phase for forming the final motion pattern.
Starting from the final configuration of Algorithm \ref{alg:positioning} the curent phase reaches 
a circular flocking configuration as shown in Figure \ref{fig:figure2}. 
The circular shape has the following caracteristics: $Leader$ is inside 
the $SEC$ (the one computed by Algorithm \ref{alg:positioning}) 
and all the other robots are disposed on the SEC border. 
These robots are placed as follows: $R1$ is 
in the position $SEC \cap [O,Leader)$ and $R2$ is on the opposite side of the $SEC$. The other robots are disposed on the quarter of circle around $R2$.
In the following this configuration will be referred as {\it circular flocking configuration} (see Figure \ref{fig:figure2} for a nine robots example). 

\begin{figure}[h]
\begin{center}
\begin{pspicture}(10,8)

\psarc{<-}(6,5){4}{0}{360}
\uput{.1}[225](6,5.5){$O$}
\pscircle*(8.82,7.82){0.25}
\uput{.1}[225](9.8,7.4){$R1$}
\pscircle*(6.5,5.5){0.25}
\uput{.1}[225](7.2,5.5){$Leader$}
\pscircle*(2.35,3.35){0.25}
\uput{.1}[225](2,3){$p_2$}
\pscircle*(4.43,1.36){0.25}
\uput{.1}[225](4,1){$p_3$}
\pscircle*(6,1){0.25}
\uput{.1}[225](6.5,.6){$p_4$}
\pscircle*(2,5){0.25}
\uput{.1}[225](2.6,4.8){$p_1$}

\pscircle*(5.2,1.1){0.25}
\uput{.1}[225](6.5,.6){$p_4$}
\pscircle*(3.75,1.67){0.25}
\uput{.1}[225](2.6,4.8){$p_1$}

\pscircle*(3.17,2.17){0.25}
\uput{.1}[225](3.9,2.25){$R2$}
\psline[linestyle=dashed, linecolor=black] (8.82,7.82)(3.17,2.17)
\psline[linestyle=dashed, linecolor=black] (2,5)(6,5)
\psline[linestyle=dotted, linecolor=black] (3.17,7.82)(8.82,2.17)
\psline[linestyle=dashed, linecolor=black] {-*}(6,1)(6,5)
\psline[linestyle=dashed, linecolor=black] (6,5)(2.35,3.35)
\psline[linestyle=dashed, linecolor=black] (6,5)(4.43,1.36)
\psline[linestyle=dashed, linecolor=black] (6,5)(5.2,1.1)
\psline[linestyle=dashed, linecolor=black] (6,5)(3.75,1.67)
\end{pspicture}
\end{center}
\caption{Circular flocking configuration}
\label{fig:figure2}
\end{figure}
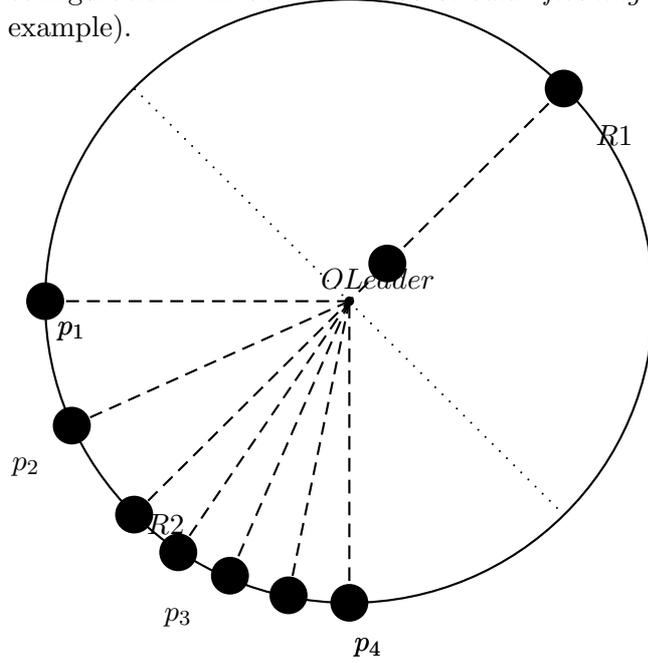

In order to construct the circular flocking configuration
we introduce the concept of oriented configuration:
\begin{definition}[oriented circular configuration]
A configuration is called circular oriented if the following conditions hold:
\begin{enumerate}
   \item All robots are at distinct positions on the  $SEC$ , 
         except only one of them, called  $Leader$, located inside $SEC$ ;
   \item $Leader$ is not located at the center of $SEC$;
	\item Two robots, named $R1$ and $R2$, form with $Leader$ and $O$ (the center of the $SEC$) a line;
\end{enumerate}
\end{definition}

Note that the output of Algorithm \ref{alg:positioning} satisfies point 1 of the definition above.
Note also that the alignment of the references phase supply the point 2 and 3 of the definition above.

Algorithm \ref{alg:flockconfiguration} below started in an oriented configuration  
eventually converges to a circular flocking configuration.

The algorithm makes use of the following function: 
$FinalPositions(SEC, R1, R2)$ which 
returns, when invoked by a robot, the set of positions in the circular flocking configuration
with respect to $SEC$ and to the points $R1$ and $R2$. $R1$ is the 
robot on the intersection between the segment $[O, Leader)$ and the circle $SEC$.
$R2$ is the robot on the intersection between the diameter of $SEC$ passing through $R1$ and the center of $SEC$. To calculate the position of the other robots, we split the circumference in two parts, taking $R1$ and $R2$ as terminal points. Each robot counts how many robots are in its semi-circumference, so it will create, in the first quarter closest to $R2$, as many equidistant final positions as many robots are(considering that the last position is already occupied by $R2$). 
The order of these positions is given from the closest to $R1$, to the furthest 

\begin{algorithm}[h]                                                  
\begin{quote}
Functions:\\
\hspace*{0.5cm}
$get\_number(myself)$ returns the number of robots  
between $myself$ and \\
\hspace*{2.5cm}position $p_1$ (including robot myself) clockwise\\
\hspace*{0.5cm}
$get\_position(myself)$ returns the position $get\_number(myself)$ \\
\hspace*{2.5cm}
in $FinalPositions(SEC,p_1)$\\
\hspace*{0.5cm}
$FreeToMove(myself)$ returns true if there are no robots between $myself$\\
\hspace*{2.5cm} 
and $get\_position(myself)$\\

Motion Rule:\\
\hspace*{1cm}
  {\bf if}  FreeToMove(myself) \textbf{then}\\
\hspace*{1.5cm}
         move to $get\_position(myself)$\\          
\end{quote}
\caption{Setting the motion formation executed by robot $myself$}
\label{alg:flockconfiguration}
\end{algorithm}

The idea of the algorithm is as follows. Robots started in an oriented configuration reach their 
final positions in the circular flocking configuration. 
If a robot is blocked by some other robots than it waits until all these 
robots are placed in their final positions.

\begin{lemma}
In a system with $n$ robots,
Algorithm \ref{alg:flockconfiguration} started in an oriented configuration 
converges in $O(n)$ steps  to a configuration where all robots 
reached their final positions computed via FinalPositions function.
\end{lemma}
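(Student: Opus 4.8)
The plan is to reduce the claim to two facts: that each robot's target under $FinalPositions$ is fixed throughout the execution, and that the robots fill these targets in a monotone, collision-free fashion, a constant amount of progress per $O(1)$ rounds. First I would record a \emph{stability invariant}. Because all robots share the common coordinate system built in Section~\ref{sec:common-coordinates} and compute $FinalPositions(SEC,p_1)$ deterministically from the fixed references $R1$, $R2$ and $O$ (which this phase never moves, as the motion rule only relocates non-reference robots toward the $SEC$), every robot agrees on the same ordered list of target positions, and each robot's target is $get\_position(myself)$, the $get\_number(myself)$-th entry of that list. I would then argue that the motion rule never lets a robot overtake another on the $SEC$: a robot moves only along the circle toward a target lying in the clear arc between itself and $get\_position$, so the clockwise cyclic order of the robots is preserved. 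Consequently $get\_number(myself)$, and hence each robot's target, is invariant across rounds, even though the robots are oblivious and recompute everything at each activation.

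Next I would establish progress and the absence of deadlock. Order the target positions consecutively along the quarter arc anchored at the fixed robot $R2$. I claim that in any configuration that is not yet the circular flocking configuration, at least one robot is enabled, i.e. satisfies $FreeToMove$ and is not on its target. Taking the maximal prefix of targets already occupied by their correct robots, the robot destined for the first unfilled target has no robot strictly between itself and that target along its sector (the correctly placed robots and $R2$ sit beyond it), so it satisfies $FreeToMove$ and can advance. Since the final configuration has these positions filled consecutively, such a smallest unfilled index always exists until convergence, so no configuration can be a non-final deadlock. This is the second-phase analogue of the filling argument used for Algorithm~\ref{alg:positioning}.

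I would then rule out collisions by reusing the sector argument of Lemma~\ref{lemma:collision_free}: two simultaneously enabled robots have distinct clockwise ranks, hence distinct targets whose clearing arcs lie on disjoint portions of the $SEC$, so they move within non-overlapping arcs and never meet even when several move in parallel. Obliviousness and asynchrony are absorbed by the stability invariant: if the CORDA scheduler interrupts a robot before it reaches $get\_position$, then at its next activation it observes a configuration in which its rank, and therefore its target, is unchanged, so it simply resumes along the same clear arc; a partial move cannot create a blocking robot for anyone else because it remains inside its own sector. Finally, the step count follows by combining progress with the invariant: the targets become permanently occupied in index order, and once a prefix is fixed the next robot needs only a bounded number of activations to traverse its clear arc, so summing over the $n-1$ non-leader robots gives convergence in $O(n)$ steps, matching Algorithm~\ref{alg:positioning}.

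The step I expect to be the main obstacle is making the interaction of obliviousness with CORDA-asynchrony fully rigorous in the progress and collision arguments. I must rule out that repeated interruptions of partial moves, together with the recomputation forced by obliviousness, either prevent some robot from ever completing its move, or cause two robots whose clearing sectors \emph{appear} disjoint at their respective (distinct) observation instants to nonetheless collide because the configuration changed in between. Pinning down that the clearing sectors stay disjoint across every interleaving the adversarial scheduler can produce is the delicate point; everything else is the geometric bookkeeping sketched above.
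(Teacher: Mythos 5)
Your overall architecture (fixed targets, monotone filling, one placement per step giving $O(n)$) matches the paper's, and your stability and collision observations are reasonable additions that the paper's own terse proof leaves implicit. However, there is a genuine gap in your progress step. You claim that the robot destined for the first unfilled target ``has no robot strictly between itself and that target along its sector (the correctly placed robots and $R2$ sit beyond it), so it satisfies $FreeToMove$.'' This is false in general: the blockers need not be correctly placed robots. Since robots are ranked clockwise from $p_1$ and the targets sit on a quarter arc near $R2$, a higher-ranked, not-yet-placed robot $r_j$ ($j>i$) can lie between $r_i$'s current location and $p_i$. For instance, if all robots start bunched on the arc just before $p_1$ in clockwise order, then $r_1$ is blocked by $r_2,\dots,r_n$, $r_2$ is blocked by $r_3,\dots,r_n$, and the only robot that can move first is $r_n$ --- the \emph{last} index, not the first unfilled one. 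So ``a smallest unfilled index always exists'' does not yield an enabled robot, and your deadlock-freedom claim does not follow as written.

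The paper closes exactly this hole with a waiting-chain argument: it considers the segment $L$ of robots not yet placed and the chain of ``who waits for whom''; because the blocking relation is consistent with the total order on ranks, the chain is finite and acyclic, hence one of its \emph{ends} (which may be $r_2$ or $r_n$, i.e.\ either extremity of $L$) is free to move to its final position, shrinking $L$ by one. Iterating gives one placement per step and hence $O(n)$ convergence. Your argument can be repaired by replacing ``the robot for the first unfilled target can move'' with ``follow the chain of blockers from any unplaced robot; by acyclicity it terminates at a robot satisfying $FreeToMove$.'' The remainder of your proof (order preservation on the circle, disjoint clearing arcs, and the accounting of $n-1$ placements) is compatible with the paper's reasoning once this substitution is made; note also that the paper treats one semi-circumference and invokes symmetry for the other, a decomposition your sketch elides but which does not affect correctness.
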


\begin{proof}
For this proof we can consider only one semi-circumference, the behavior of the other one will be totally symmetric to the former.
Let $p_1, \ldots, p_n$ be the robots' final positions returned by $FinalPositions(SEC,R1,R2)$  and let $r_1, \ldots, r_n$ be the set of robots to be placed. 
Assume the worst initial configuration: no robot is placed in its final position
Let denote by $L$ the segment defined by the robots which are not placed in their final positions.
The initial length of $L$ is $n$.  
The robots $r_1$ (the first robot in $L$) can freely move to its final position $p_1$.
Once this robot is placed the length of segment $L$ becomes $n-1$. One of the new ends of $L$ 
can be placed to its final position. Assume the contrary. All robots are blocked.
So, there is a waiting chain such that $r_2 \rightarrow r_3 \rightarrow \ldots \rightarrow r_n$ or 
$r_n \rightarrow \ldots r_2$. Since the chain is finite and not cyclic (due to the total order) 
one of the ends of the chain can move ($r_2$ or $r_n$). After this robot moves the length of the segment 
$L$ decreases. Eventually, all robots in $L$ finish in their final positions.
In the worst case, the algorithm converges in $O(n)$ steps. 
\end{proof}

\section{Flocking Pattern}
\label{sec:pattern formation}
In the following we describe the pattern the robots can form. 
\begin{definition}[Pattern]
A pattern $\mathtt{P}$ = $\{ p_1, p_2, \cdots,p_{n-3}, p_{o}, p_{R2} \}$ is the set of point given in input to the robots. It has two distinguished points $p_{o}$ and $p_{R2}$. We call the two distinguished points the $Anchor \ Bolts$ of the pattern, that will correspond to the position of robots $R2$ and to the point ($O$, $dist(Leader,O)$) of the common coordinate system. 
\end{definition} 

First, all robots but the $references$ hook the pattern,  eventually scale  and rotate it, to $R2$ and to the point ($O$, $dist(Leader,O)$). Then they associate to each robot a position in the pattern. Algorithm \ref{alg:PatFor} ensures that each robot goes to its position without colliding with other robots.
Several definitions are needed.
 \begin{definition}
 A robot $r_i$ belongs to the set $Free \ Robot$ if $r_i \neq p_i$.
 \end{definition}
 
\begin{definition}
 A position $p_i$ belongs to the set $Free \ Position$ if $r_i \neq p_i$.
\end{definition}

\begin{algorithm}	
	\begin{quote}
	\begin{tabbing}
	
Functions:\\
	$Trajectory_i$: the segment that joins $r_i$ to $p_i$ \\

	$Next$($P_i$,$P_j$): \= returns true if $x_{P_i} < x_{P_j}$ or, if  ($x_{P_i} = x_{P_j}$) and $y_{P_i} < y_{P_j}$; otherwise returns false\\
We can also say $P_i$ has $P_j$ as Next robot(position)\\\\

Assignments to all the $Free \ Robots$  but the references and to all $Free Position \neq p_L$\\ but the $anchor bolts$ of a sequential number\\ from the robot (position) with the smaller value of $x$ to the bigger.\\ If two or  more robots have the same $x$ value, then consider their $y$ value.\\\\

	1)	\textbf{for} \= (all $r_i$ 	$\neq$ $r_{me})$\\
			\> \textbf{If} \= ($p_{me} \in \ Trajectory_i)$\\
			\>	\> \textbf{then} Exit;\\

	2)	\= \textbf{if} \= (Next($p_{me},r_{me})$)\\
			\>\> \textbf{If} \= (Next($r_{me-1}, p_{me}$))\\
			\>\>	\> \textbf{then} (move along the $Trajectory_{me}$ until $x_{me-1}+\varepsilon$ );\\
			\>\>  \textbf{Else} (move to $p_{me})$\\
		\> \textbf{If} \= (Next($r_{me},p_{me})$)\\
		\>	\> \textbf{If} \= (Next($r_{me+1}, p_{me}$))\\
		\>	\>	\> \textbf{then} (move along the $Trajectory_{me}$ until $x_{me+1}-\varepsilon$ );\\ 
		\>	\>  \textbf{Else} (move to $p_{me})$\\

\end{tabbing}
\end{quote}
\caption{Pattern Formation executed by robot "me"}
\label{alg:PatFor}
\end{algorithm}

\begin{definition}[DeadLock]
 A Deadlock configuration  is a configuration where  each robot is blocked by another robot.
\end{definition}
Note that: 	
1) Following Rule 1 a Deadlock arrives if there exists a robot $i$ on each $trajectory_j$. $ \forall i \neq j$ and
2) Following Rule 2 a Deadlock can occur if all the robots have a Next robot.

In the following we prove that none of these two conditions are satisfied.


\begin{lemma}
\label{NoDeadLock}
No deadlock configuration is reached. 
\end{lemma}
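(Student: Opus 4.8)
The plan is to refute, one at a time, the two deadlock-causing situations singled out in the note following the statement: the Rule~1 situation (a blocking robot sits on every trajectory) and the Rule~2 situation (every robot has a $Next$ robot it must wait for). Showing that neither can hold simultaneously for all robots gives that, in any configuration other than the completed pattern, at least one robot is enabled to move, which is precisely the absence of a deadlock. The central tool throughout is the strict total order on robots and target positions induced by the $Next$ predicate, i.e. lexicographic comparison of $(x,y)$ in the common coordinate system, which is well defined after Section~\ref{sec:common-coordinates}.

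First I would dispose of the Rule~2 situation. A Rule~2 deadlock requires each robot to be stopped $\varepsilon$ short of the neighbour it is ordered against, so that the "waits-for" relation it induces assigns to every robot a $Next$ robot and closes a cycle. But $Next$ is a strict linear order on the finite set of the $n-3$ placed robots, so it has a unique maximal element, namely the robot with the largest $x$ (ties broken by the largest $y$), and this element has no $Next$ robot at all. Hence not every robot can have a $Next$, the maximal robot is free to advance toward its target, and the waiting relation is acyclic. This is the same finite-chain acyclicity argument already used for Algorithm~\ref{alg:flockconfiguration}, where one end of the chain can always move.

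Next I would rule out the Rule~1 situation, in which every robot $me$ is frozen because its destination $p_{me}$ lies on some other robot's trajectory $Trajectory_i=[r_i,p_i]$. Reading "$p_{me}\in Trajectory_i$" as "$me$ waits for $i$", a full Rule~1 deadlock is exactly a total waits-for map on the finite robot set, which on a finite set necessarily contains a cycle. To exclude it I would exhibit one robot whose destination lies on no trajectory: take the destination $p_k$ that is minimal in the $Next$ order. Because trajectories are straight segments and the target points are pairwise distinct and lex-ordered, I would argue that $p_k$ cannot be interior to any $[r_i,p_i]$ — a segment containing the lex-minimal target would have to reach strictly below it in the order, contradicting either minimality of $p_k$ among the targets or the placement of the starting quarter-circle region. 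Consequently the robot targeting $p_k$ is never blocked under Rule~1, the waits-for map is not total, no cycle forms, and the Rule~1 condition fails as well. Combining the two parts, no configuration satisfies either deadlock condition, so in every non-final configuration some robot moves and no deadlock is ever reached.

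The step I expect to be the main obstacle is the Rule~1 argument: unlike the purely order-theoretic Rule~2 case, it genuinely mixes the total order with the segment geometry of the trajectories, and making "the minimal destination cannot lie inside another trajectory" fully rigorous requires a short case analysis on the relative positions of $r_i$, $p_i$ and the starting region, together with the distinctness of the target positions. If a clean extremal point does not immediately work, the fallback is to prove directly that the Rule~1 waits-for relation is acyclic — a mutual pair $p_1\in Trajectory_2$, $p_2\in Trajectory_1$ built from straight segments with distinct endpoints is geometrically impossible — which again yields an unblocked robot and the same conclusion.
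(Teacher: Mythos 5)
Your Rule~2 argument is correct and is essentially the paper's: $Next$ induces a strict total order on a finite set, so some robot has no $Next$ robot and the waiting chain cannot be cyclic. The problem is the Rule~1 part, and it sits exactly where you flagged the "main obstacle." The paper does not hunt for a target that lies on no trajectory; it argues that the blocking relation is \emph{monotone} with respect to the order --- if $p_j \in Trajectory_i$ then $Next(p_j,p_i)$ holds, because a forward-moving robot's segment only sweeps points that precede its own target $p_i$ --- so a blocking cycle would force a cycle in $Next$, contradicting its transitivity and antisymmetry.

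Your extremal choice, the lex-\emph{minimal} target $p_k$, does not give a contradiction: the segment $Trajectory_i=[r_i,p_i]$ has a \emph{robot position} $r_i$ as its lower endpoint, not a target, so "a segment containing $p_k$ must reach strictly below $p_k$" contradicts nothing about minimality of $p_k$ \emph{among the targets} --- $r_i$ may well be lex-smaller than $p_k$ while $p_i$ is larger, and then $Trajectory_i$ passes through $p_k$. The robot that provably cannot be Rule~1-blocked is the one with the \emph{maximal} target (among forward movers), and only via the monotonicity implication above. Your fallback is also unsafe: a mutual pair $p_1\in Trajectory_2$, $p_2\in Trajectory_1$ is \emph{not} geometrically impossible once a robot moves toward decreasing $x$ (take collinear points with $r_1=0$, $p_2=5$, $p_1=10$, $r_2=12$: robot $1$ sweeps through $p_2$ going forward, robot $2$ sweeps through $p_1$ going backward). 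Excluding such cycles requires the order-theoretic implication "$p_j\in Trajectory_i \Rightarrow Next(p_j,p_i)$" (with the backward-moving branch of Rule~2 handled separately), not segment geometry alone; with that implication in hand, acyclicity of $Next$ finishes the proof as in the paper.
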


\begin{proof}
In the following we prove that none of the two deadlock conditions are satisfied. 




Assume that the three robots are in a DeadLock situation. We call these three robots $r_1$, $r_2$ and $r_3$. To these three robots will correspond three position $p_1$, $p_2$, $p_3$. Assume also that the following statements are true:
Next($p_1$, $p_2$) and 
Next($p_2$, $p_3$). 

If DeadLock then: 
$p_1 \in Trajectory_2$ and
$p_2 \in Trajectory_3$ and 
$p_3 \in Trajectory_1$.

In the following we prove that none of the previous situations can happen. 
First assume Next($r_2$,$p_2$) returns true\footnote{The case Next($p_2$,$r_2$) is totally symmetric}:
If $p_1 \in Trajectory_2$   then Next($p_1$,$p_2$) returns true .

Let $p_2$ on the $Trajectory_3$  then Next($p_2,p_3$) returns true. 
If $p_3 \in \ Trajectory_1$ then Next($p_3,p_1$) returns true which contradicts the hypothesis Next($p_1$, $p_3$).

From the definition of $Next$ we can assert that the relation is unequivocal (unambiguous) so if we have two robots A and B and Next(A,B) returns true then Next(B,A) must return  false.  
Also if   Next(A,B) returns true  and  Next(B,C) returns true  then also  Next(A,C) should return true. As before we can say that A and B have a next robot but C do not has any Next robot.  If we iterate the process we will always find one robot without a Next robot.
\end{proof}

\begin{lemma}
\label{CollisionFree}
The algorithm is collision free\footnote{Note that is the algorithm is collision free two robots cannot try to get the same position.}. 
\end{lemma}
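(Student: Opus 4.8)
The plan is to rule out the only two events that constitute a collision for Algorithm~\ref{alg:PatFor}: two robots arriving at the same point, and two robots meeting at an interior point of two crossing trajectories while both are in motion. I would first reduce the claim to the impossibility of these two events, then treat them separately, leaning on the total order induced by the common coordinate system and on Lemma~\ref{NoDeadLock} for liveness.

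For the first event, I would argue that the target map is a bijection. The assignment in Algorithm~\ref{alg:PatFor} numbers every $Free\ Robot$ and every $Free\ Position$ sequentially by increasing $x$, breaking ties by $y$; since all robots share the same coordinate system, every robot computes the same numbering, so distinct robots receive distinct targets $p_{me}$. Hence no two robots ever aim for, or settle on, the same position.

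The core of the argument is the second event. Here I would show that the lexicographic order of the robots is an invariant of the execution. The $\varepsilon$-buffers of Rule~2 are exactly what enforce this: a robot whose target lies to its right (Next($r_{me},p_{me}$)) advances only as far as $x_{me+1}-\varepsilon$ whenever its successor has not yet passed the target, and symmetrically a robot moving left stops at $x_{me-1}+\varepsilon$. Thus no robot ever overtakes its order-neighbour, and by transitivity the global order, with a strict separation in the ordering coordinate, is preserved throughout, so two distinct robots never coincide geometrically during motion. The residual degenerate case is a target lying exactly on another robot's trajectory (a tangential crossing): Rule~1 forces the robot owning that trajectory to wait, while Lemma~\ref{NoDeadLock} guarantees that some other robot can still advance, so the robots sharing the crossing are serialised. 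This is precisely the situation of $r_1,r_2,r_3$ in Figure~\ref{fig:1}, where $r_1$ clears the crossing before $r_2$, which clears it before $r_3$.

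The hard part will be making the invariance argument robust to the CORDA asynchrony and to ties. Because a robot may be observed and re-activated in the middle of a move, it need not reach $x_{me\pm1}\mp\varepsilon$ in a single activation and may act on an observation that is already stale; I would close this by noting that the destination prescribed by Rule~2 always lies strictly inside the current safe gap between the robot and its neighbour, so any prefix of the intended move keeps the robot within its buffers and the order invariant holds after every atomic step. The tie-breaking case, where consecutive targets share the same $x$ and are ordered by $y$, needs the buffer argument phrased for the full lexicographic order rather than for $x$ alone; I would handle it by observing that two such targets lie on a common vertical line but are separated in $y$, and that Rule~1 prevents either robot from resting on the other's vertical trajectory, so the $\varepsilon$-separation still yields distinct occupied points.
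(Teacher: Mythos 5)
Your argument is correct and follows essentially the same route as the paper's: the heart of both proofs is that Rule 2's $\varepsilon$-buffers prevent any robot from overtaking its order-neighbour, so the separation in the ordering coordinate is an invariant of the execution and distinct robots can never coincide. You are considerably more explicit than the paper's one-paragraph proof about the bijectivity of the target assignment, the CORDA partial-move issue, and the equal-$x$ tie case, but these are elaborations of the same argument rather than a different decomposition.
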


	\begin{proof}
Easily by the point 2  no robot can surpass another robot, moreover thanks to the same point, two FreeRobots, starting from position with different $x$ values will never occupy  two positions with the same $x$  value. Similarly if two robots start from positions with the same $x$ value, after the activation of one of the two robots, they will be in positions with a different $x$ value. Having at each round position with a different $x$ value, the robots never collide.
\end{proof}

\begin{lemma}
All robots will eventually reach their final position in the pattern.
\end{lemma}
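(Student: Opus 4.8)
The plan is to derive liveness directly from the two preceding lemmas by a monotone progress argument on the number of correctly placed robots. First I would observe that once a robot $r_i$ occupies its assigned position $p_i$ it leaves the set $Free \ Robot$, and since Rule~2 is enabled only for robots with $p_{me} \neq r_{me}$, such a robot is never activated again; hence the set of placed robots is non-decreasing along any execution. This gives a natural progress measure, namely the number of unplaced robots, which is bounded by $n$ and can only decrease.

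Next I would show that from any configuration in which not all robots are placed, at least one unplaced robot makes strict progress toward its target within a bounded number of activations. By Lemma~\ref{NoDeadLock} the configuration is not a deadlock, so neither of the two blocking conditions holds: there is a robot whose target $p_{me}$ lies on no other robot's trajectory (so Rule~1 does not force it to exit) and whose ordering test under $Next$ is satisfiable. Because the proof of Lemma~\ref{NoDeadLock} established that $Next$ induces a strict total order, there is always an extremal robot with no robot strictly between it and its destination along $Trajectory_{me}$; by Lemma~\ref{CollisionFree} its motion cannot be obstructed by a simultaneously moving robot. Such a robot therefore either reaches $p_{me}$ in one activation or, if the scheduler interrupts it, ends strictly closer to $p_{me}$ with the same target still enabled.

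Then I would close the argument exactly as in the proof for Algorithm~\ref{alg:flockconfiguration}: consider the finite, acyclic chain of unplaced robots ordered by $Next$; one of its endpoints is always free to advance to its final position, and once placed it shortens the chain by one. Iterating, every robot is eventually placed, and since the chain has length at most $n$ while each robot needs only a bounded number of activations to traverse a straight trajectory, the formation is reached in $O(n)$ steps, which yields the claim.

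The main obstacle I anticipate is reconciling this clean inductive picture with obliviousness and asynchrony: a robot may be stopped mid-motion and recomputes its target from scratch at each activation, so I must verify that the assignment $p_{me}$ is invariant as long as the placed robots and the references do not change, ensuring interrupted motions make monotone progress rather than oscillate. This reduces to checking that the sequential numbering (smallest $x$ first, ties broken by $y$, in the common coordinate system) is a deterministic function of the current robot positions that every robot computes identically; granting this, the interrupted case collapses to the uninterrupted one and the progress measure strictly decreases at each successful activation.
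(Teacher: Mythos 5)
Your proposal follows the same route as the paper, whose entire proof is the single sentence that Lemmas \ref{CollisionFree} and \ref{NoDeadLock} guarantee all robots can move to their own positions; you supply the monotone progress measure and the termination details that the paper leaves entirely implicit. The obstacle you flag at the end --- that the sequential assignment of targets must remain invariant while oblivious robots move and are interrupted, since the ordering by $x$ (then $y$) can change as positions change --- is a genuine subtlety that the paper's one-line proof does not address either, so your elaboration is strictly more careful than the original while taking the same approach.
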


	\begin{proof}

Lemma \ref{CollisionFree} and Lemma \ref{NoDeadLock} guarantee that all robots can move to their own position.
\end{proof}

The last operation, once the pattern is bootstrapped is to bring the Leader to the center of the $SEC$.
This last movement brings the robots in what will be called latter $Flocking \ Formation$.  The motion of this formation will be studied in the next section.

\end{document}